\theoremstyle{plain}
\newtheorem{theorem}{Theorem}
\newtheorem{lemma}{Lemma} 
\newtheorem{claim}{Claim} 
\newtheorem{proposition}{Proposition} 
\theoremstyle{definition}
\newtheorem{definition}{Definition} 
\newtheorem{rrule}{Reduction Rule} 
\newtheorem{trule}{Solution Subroutine}
\newcommand{\cqed}{\renewcommand{\qedsymbol}{$\lrcorner$}\qed}
\newcommand{\polyn}{n^{\Oh(1)}}
\newenvironment{claimproof}{\medskip\noindent \emph{Proof of Claim~\theclaim.}  }{\hfill\cqed\medskip}
\newlength{\RoundedBoxWidth}
\newsavebox{\GrayRoundedBox}
\newenvironment{GrayBox}[1]%
{\setlength{\RoundedBoxWidth}{.93\textwidth}
	\def\boxheading{#1}
	\begin{lrbox}{\GrayRoundedBox}
		\begin{minipage}{\RoundedBoxWidth}}%
		{   \end{minipage}
	\end{lrbox}
	\begin{center}
		\begin{tikzpicture}%
			\node(Text)[draw=black!20,fill=white,rounded corners,%
			inner sep=2ex,text width=\RoundedBoxWidth]%
			{\usebox{\GrayRoundedBox}};
			\coordinate(x) at (current bounding box.north west);
			\node [draw=white,rectangle,inner sep=3pt,anchor=north west,fill=white] 
			at ($(x)+(6pt,.75em)$) {\boxheading};
		\end{tikzpicture}
\end{center}}     
\newenvironment{defproblemx}[2][]{\noindent\ignorespaces%
	\FrameSep=6pt%
	\parindent=0pt%
	\vspace*{-1.5em}
	\ifthenelse{\isempty{#1}}{%
		\begin{GrayBox}{#2}%
		}{%
			\begin{GrayBox}{#2 parameterized by~{#1}}%
			}
			\begin{tabular*}{\textwidth}{@{\hspace{.1em}} >{\itshape} p{1.8cm} p{0.8\textwidth} @{}}%
			}{
			\end{tabular*}%
		\end{GrayBox}%
		\ignorespacesafterend
	}  
	\newcommand{\Oh}{\mathcal{O}}
	\DeclareMathOperator{\Ima}{Im}
	\DeclareMathOperator{\operatorClassNP}{{\sf NP}}
	\newcommand{\classNP}{\ensuremath{\operatorClassNP}}
	\DeclareMathOperator{\operatorClassFPT}{{\sf FPT}\xspace}
	\newcommand{\classFPT}{\ensuremath{\operatorClassFPT}\xspace}
	\DeclareMathOperator{\operatorClassW}{{\sf W}}
	\newcommand{\classW}[1]{\ensuremath{\operatorClassW[#1]}}
	\DeclareMathOperator{\operatorClassParaNP}{{\sf Para-NP}\xspace}
	\newcommand{\classParaNP}{\ensuremath{\operatorClassParaNP}\xspace}
	\DeclareMathOperator{\operatorClassXP}{{\sf X}P\xspace}
	\newcommand{\classXP}{\ensuremath{\operatorClassXP}\xspace}
    \title{\textsc{Path Cover, Hamiltonicity,  and Independence Number: An FPT Perspective}
    }
	\author{
		Fedor V. Fomin\thanks{
			Department of Informatics, University of Bergen, Norway.}\\fedor.fomin@uib.no
		\and
		Petr A. Golovach\addtocounter{footnote}{-1}\footnotemark{}\\petr.golovach@uib.no
		\and
		Nikola Jedli\v{c}kov\'{a}\thanks{
		        Department of Applied Mathematics, Faculty of Mathematics and Physics, Charles University, Czech Republic \\
                Department of Algebra, Faculty of Mathematics and Physics, Charles University, Prague, Czech Republic.}\\nikola.jedlickova@matfyz.cuni.cz
		\and Jan Kratochvíl\thanks{
		        Department of Applied Mathematics, Faculty of Mathematics and Physics, Charles University,   Czech Republic.}\\honza@kam.mff.cuni.cz		 
        \and		
		Danil Sagunov\thanks{
			St.\ Petersburg State University, Russia and
            St.\ Petersburg Department of V.~A.~Steklov Mathematical Institute of the Russian Academy of Sciences, Russia.
        }\\danilka.pro@gmail.com
		\and 
		Kirill Simonov\addtocounter{footnote}{-4}\footnotemark{}\\kirill.simonov@uib.no
	}
	\date{}
\begin{document}
		
\maketitle	
\thispagestyle{empty}

\begin{abstract}
 The classic theorem of Gallai and Milgram (1960) generalizes several fundamental results in Graph Theory, such as Dilworth’s theorem on posets and Kőnig's theorem on matchings in bipartite graphs. The theorem asserts that for every graph \(G\),
the vertex set of \(G\) can be partitioned into at most \(\alpha(G)\) vertex-disjoint paths,
where \(\alpha(G)\) is the maximum size of an independent set in \(G\).
The proof of the Gallai--Milgram theorem is constructive and yields a polynomial-time algorithm 
that computes a covering of \(G\) by at most \(\alpha(G)\) vertex-disjoint paths. 
While the Gallai--Milgram theorem is tight---there are graphs where one really needs 
\(\alpha(G)\) paths, not fewer, to cover the vertex set of \(G\)---it was not known prior to our work 
whether deciding if a graph \(G\) could be covered by fewer than \(\alpha(G)\) vertex-disjoint paths 
can be done in polynomial time.

We resolve this question by proving the following algorithmic extension of the Gallai--Milgram theorem for undirected graphs: There is an algorithm that, for an \(n\)-vertex graph \(G\) and an integer parameter \(k \ge 1\), runs in time
$
  2^{2^{\Oh(k^4\log{k})}} \cdot n^{\Oh(1)}
$
and outputs a path cover \(\mathcal{P}\) of \(G\) together with  
\begin{itemize}
    \item   a correct conclusion that \(\mathcal{P}\) is a minimum-size path cover, or
    \item  an independent set of size 
    \(\lvert \mathcal{P}\rvert + k\), certifying that \(\mathcal{P}\) contains 
    at most \(\alpha(G) - k\) paths.
\end{itemize}

Thus, for \(k \in \Oh((\log\log n)^{\frac{1}{4}-\varepsilon})\) our algorithm runs in polynomial time and either computes a minimum-size path cover of \(G\), or finds a path cover of size at most \(\alpha(G) - k\). We find the existence of such an algorithm quite surprising for the following reason. The problems of computing a path cover and a maximum independent set are both notoriously hard, yet our algorithm either solves one of them or provides meaningful information about the other.

The proof of our algorithmic extension of the Gallai--Milgram theorem is non-trivial and builds on several novel algorithmic ideas. 
One of the key subroutines in our algorithm is an \(\mathsf{FPT}\) algorithm, parameterized 
by \(\alpha(G)\), for deciding whether \(G\) contains a Hamiltonian path. 
This result is of independent interest---prior to our work
 no polynomial-time algorithm for deciding Hamiltonicity was known even for graphs with independence number at most three. 
Moreover, the algorithmic techniques we develop apply to a wide array of problems in undirected graphs, 
including \emph{Hamiltonian Cycle}, \emph{Path Cover}, \emph{Largest Linkage}, 
and \emph{Topological Minor Containment}. We show that all these problems 
are \(\mathsf{FPT}\) when parameterized by the independence number of the graph.

Notably, the independence-number parameterization departs from the typical direction of research in parameterized complexity. 
First, \(\alpha(G)\) measures a graph's density, whereas most prior work in the area 
focuses on parameters describing sparsity, such as treewidth or vertex cover. 
Second, most structural parameters studied in parameterized complexity can be computed exactly or well-approximated in polynomial or even \(\mathsf{FPT}\) time, whereas computing \(\alpha(G)\) is notoriously difficult from almost any computational perspective. 
The fact that it can nevertheless serve as the basis for efficient parameterization is particularly striking.

 \end{abstract}

\tableofcontents
\newpage
\section{Introduction}
Gallai–Milgram's theorem, published by Tibor Gallai and Arthur Milgram in 1960 \cite{GallaiM60}, is a cornerstone of structural graph theory and can be viewed as a far-reaching generalization of several classical results, such as Dilworth's and Kőnig's theorems \cite{Dilworth1950,konig1931}, concerning the covering of graphs. In its original form, the theorem states that every (directed or undirected) graph \(G\) can be covered by at most \(\alpha(G)\) vertex-disjoint paths, where \(\alpha(G)\) is the size of a maximum independent set in \(G\).

Beyond its immediate implications, Gallai–Milgram's theorem belongs to a broader line of research in graph theory focused on connections between Hamiltonicity and independence. This theme can be traced back to the seminal work of Rédei \cite{Redei1934}, who showed that every tournament (a directed graph with \(\alpha(G) = 1\)) contains a directed Hamiltonian path. Further investigations into the relationships among stability number, connectivity, and Hamiltonicity began with Nash-Williams \cite{NashWill71} and Chvátal–Erdős \cite{chvatal1972note}, and have continued in later works (see, e.g., \cite{AmarFournierGerma1983,bauer1989generalization,jackson1987chvatal,Koider1994,BenArroyoHartman1988,BessyThomasse2007,HahnJackson1990,thomasse2001covering,fox2009paths}).

In this paper, we present a novel algorithmic perspective on this classic theme in graph theory: the interplay among Gallai–Milgram's theorem, Hamiltonicity, and the independence number.

\subsection{Our contribution} 
Known constructive proofs of Gallai--Milgram's theorem can be used to construct in polynomial time a covering of the vertex set of a graph $G$ by at most $\alpha(G)$ vertex-disjoint paths. Our following theorem presents a significant generalization of this algorithmic result for undirected graphs.

\begin{restatable}{theorem}{thmabovgm}
    \label{thm:above_GM}
    There is an algorithm that, given an $n$-vertex graph $G$  and an integer parameter $k \ge 1$, in time $2^{k^{\Oh(k^4)}} \cdot n^{\Oh(1)}$ outputs a cover of $V(G)$ by a family $\mathcal{P}$  of vertex-disjoint paths and, furthermore,
    \begin{itemize}
    \item either correctly reports that  $\mathcal{P}$ is a minimum-size path cover, or
 \item  outputs together with $\mathcal{P}$ an independent set of size $|\mathcal{P}|+k$ certifying that  $\mathcal{P}$ contains at most $\alpha(G)-k$ paths.	   
    \end{itemize}
    \end{restatable}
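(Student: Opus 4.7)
The plan is iterative compression: maintain a path cover $\mathcal{P}$ of $G$ and attempt, at each round, to either decrease $|\mathcal{P}|$ by one, to find an independent set of size $|\mathcal{P}|+k$ (terminating with the desired witness), or to certify that $\mathcal{P}$ is already minimum. The compression steps are executed by invoking \Cref{thm:max_model} as a subroutine. As initialization, a constructive version of the Gallai--Milgram theorem produces in polynomial time an initial cover $\mathcal{P}_0$ with $|\mathcal{P}_0|\le\alpha(G)$.

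\textbf{Compression via \Cref{thm:max_model}.} Suppose the current cover $\mathcal{P}$ has size $q$. A path cover of $G$ with exactly $q-1$ paths, of which $a$ are non-trivial (size $\ge 2$) and $b$ are singletons with $a+b=q-1$, corresponds precisely to a TM-embedding of $H_{a,b} := a\,P_2 \sqcup b\,P_1$ in $G$ of total size $|V(G)|$ under the trivial list assignment $L(v):=V(G)$. We enumerate all $q$ choices of $(a,b)$ with $a+b=q-1$ and for each invoke \Cref{thm:max_model} on $(H_{a,b},G,L)$ with parameter $k':=q+k$. A TM-embedding of size $|V(G)|$ yields a new cover of size $q-1$ and we repeat; an independent set $I$ of size $q+k$ is the desired certificate and we output $(\mathcal{P},I)$; and if every call returns ``no embedding'' or a maximum embedding of size strictly below $|V(G)|$, then no cover of size $q-1$ exists, so $\mathcal{P}$ is minimum and is output.

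\textbf{Bounding $|\mathcal{P}|$---the main obstacle.} For the overall running time to match $2^{k^{\Oh(k^4)}}\cdot n^{\Oh(1)}$, both $|H_{a,b}|\le 2(q-1)$ and $k'=q+k$ passed to \Cref{thm:max_model} must be bounded by $\Oh(k^2)$, so we must maintain $q=|\mathcal{P}|\le\Oh(k^2)$ throughout. Gallai--Milgram only guarantees $|\mathcal{P}_0|\le\alpha(G)$, which may be vastly larger, so the heart of the argument is a polynomial-time structural preprocessing that, whenever $|\mathcal{P}|>c\cdot k^2$ for a sufficiently large constant $c$, produces one of the following: (i) a cover of strictly smaller size (via local moves such as P\'osa-style rotations across multiple paths, strictly stronger than the single-arc swaps implicit in the classical Gallai--Milgram proof), or (ii) an independent set of size $|\mathcal{P}|+k$ assembled by a pigeonhole argument over the alternative path endpoints exposed by such rotations. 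This ``above Gallai--Milgram'' witness construction---effectively lifting the classical $|\mathcal{P}|$-sized endpoint independent set to an $|\mathcal{P}|+k$-sized one whenever $\alpha(G)\ge|\mathcal{P}|+k$, without ever invoking \Cref{thm:max_model} on a large $H$---is the technical bottleneck I expect to be hardest. Once this preprocessing is in place, each call to \Cref{thm:max_model} costs $2^{k^{\Oh(k^4)}}\cdot n^{\Oh(1)}$; with $\Oh(k^2)$ partitions $(a,b)$ per round and at most $\Oh(k^2)$ successful compression rounds, the total running time matches the claim.
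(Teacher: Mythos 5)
Your iterative compression framework for the small-$|\mathcal{P}|$ case is essentially what the paper does in its Lemma~\ref{lemma:solve_given_small_cover}: once $|\mathcal{P}|=\Oh(k^2)$, enumerate target graphs $H$ that are disjoint unions of $P_1$'s and $P_2$'s, call \Cref{thm:max_model} with parameter $|\mathcal{P}|+k$, and read the answer. So that part of the plan is sound.

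The genuine gap is in the preprocessing that you acknowledge as the bottleneck, and the gap is not merely that it is unproved---the dichotomy you propose is \emph{false} as stated. You claim that whenever $|\mathcal{P}| > c k^2$ one can either (i) shrink $\mathcal{P}$ by local rotations or (ii) extract an independent set of size $|\mathcal{P}|+k$. Take $G$ to be a disjoint union of $\alpha$ cliques for arbitrarily large $\alpha$. Then $\operatorname{pc}(G)=\alpha(G)=\alpha$, any minimum cover has $\alpha$ paths, no rotation shrinks it, and no independent set of size $\alpha+k$ exists. So neither (i) nor (ii) can be achieved, yet $|\mathcal{P}|=\alpha$ can be vastly larger than $k^2$; your algorithm would then have no choice but to invoke \Cref{thm:max_model} with $|H|=\Theta(\alpha)$, blowing the running time. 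The missing idea---and the real content of the paper's \Cref{thm:above_GM}---is that the preprocessing must also \emph{shrink the graph}, not only the cover. Concretely, the paper classifies cover paths into \emph{usual} (distinct non-adjacent endpoints) and \emph{special} (endpoints coinciding or adjacent, i.e.\ essentially cycles), observes that $\ge k$ usual paths already give the desired independent set, and then through a sequence of path-trading reduction rules arrives at a situation where a separator $S$ of size $<k$ isolates all but $\Oh(k)$ special paths, which are cliques. Its Lemma~\ref{lemma:selected_cliques} then shows that all but $\Oh(k^2)$ of these clique components are irrelevant and can be \emph{deleted} from $G$ (each accounting for exactly one path of the optimum), after which $|\mathcal{P}|=\Oh(k^2)$ and your compression machinery applies. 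Without a deletion step of this kind, the plan cannot be repaired.
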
    
    
  Loosely speaking, the theorem means that computing a path cover below Gallai--Milgram's bound, that is, a path cover of size at most \(\alpha(G)-k\), is \classFPT when parameterized by \(k\). However, formally, such a statement is not correct: A trivial reduction shows that deciding whether \(G\) has a path cover of size at most \(\alpha(G)-k\) is \classW{1}-hard. Indeed, for an \(n\)-vertex input graph \(G\), construct a graph \(G'\) by adding a clique of size \(n\) whose vertices are made universal to \(G\). Then \(\alpha(G') = \alpha(G)\), and \(G'\) is trivially Hamiltonian. Thus, for \(k' = k-1\), the graph \(G'\) can be covered by at most \(\alpha(G')-k'\) paths if and only if \(\alpha(G) \geq k\). Since the \textsc{Independent Set} problem is \classW{1}-complete~\cite{DowneyF99} when parameterized by the solution size, the existence of an \classFPT algorithm in \(k\) for deciding whether \(G\) has a path cover of size at most \(\alpha(G) - k\) would contradict the basic complexity assumptions of parameterized complexity theory. 

Of course, in situations where the independence number of a graph can be efficiently computed, for example when the input graph is perfect, \Cref{thm:above_GM} implies that the path cover below Gallai--Milgram is \classFPT. The interesting feature of \Cref{thm:above_GM} is that it circumvents the intractability of computing \(\alpha(G)\)---the algorithm does not need to know \(\alpha(G)\) at all. This is achieved by allowing the computation of an optimal path cover rather than concluding that \(G\) has no path cover of size at most \(\alpha(G)-k\). Another notable feature of the algorithm is that it \emph{always} outputs a path cover which is either optimal or of size at most \(\alpha(G)-k\). It is somewhat ironic that the ``formal'' intractability of the path cover below Gallai--Milgram arises only from the requirement of identifying no-instances, that is, situations where \(\alpha(G)-k\) is smaller than the size of a minimum path cover.

\Cref{thm:above_GM} aligns with a broader trend in parameterized algorithms that seeks algorithmic analogues of classic theorems in extremal graph theory. Examples include algorithmic extensions of Poljak--Turzík's theorem~\cite{MnichPSS14}, Dirac's theorem~\cite{FominGSS21,DBLP:conf/wg/Jansen0N19}, and Hajnal--Szemerédi's theorem~\cite{gan2023algorithmic}. As with \Cref{thm:above_GM}, each of these results provides an \classFPT algorithm parameterized “above” or “beyond” a known combinatorial bound. A distinguishing feature of \Cref{thm:above_GM}, however, is that in previous work the relevant combinatorial bounds (e.g., the maximum vertex degree) were efficiently computable, whereas determining whether \(\alpha(G) \ge k\) is \classW1-hard with respect to \(k\)~\cite{DowneyF99} and likely does not admit an \classFPT approximation~\cite{LinRSW22,SK22}.  

\medskip

One of the key components in the proof of \Cref{thm:above_GM} is an \classFPT\ algorithm, parameterized by \(\alpha(G)\), for deciding the Hamiltonicity of \(G\).  
Moreover, the techniques developed to obtain this result extend well beyond Hamiltonicity and apply to a broader class of problems.  
Our second main result, \Cref{thm:max_model}, provides an \classFPT\ algorithm for computing various topological embeddings in a graph, parameterized by its independence number. While only a restricted case of this result is needed in the proof of \Cref{thm:above_GM}, we believe that the general approach is of independent interest, and therefore present it in full generality.

\medskip
Let us recall that a graph $H$ is called a \emph{topological minor} of a graph $G$ if there exists a subgraph of $G$ that is isomorphic to a subdivision of $H$.  
A \emph{subdivision} of a graph is formed by replacing one or more edges with simple paths (i.e., inserting additional vertices along those edges).  
Thus, saying that $H$ is a topological minor of $G$ means that $G$ contains a subgraph $M$ that represents a ``subdivided'' version of $H$, which we refer to as a \emph{topological minor model} $M$ of $H$ in $G$.  
Many interesting problems can be expressed in terms of topological minors.  
For example, the problem of computing a longest cycle in a graph is equivalent to computing the largest model $M$ of a graph $H = C_3$, the cycle on three vertices. More formally,

 \begin{definition}[TM-model and TM-embedding]
Let $G$, $H$ be graphs, let $M$ be a subgraph of $G$ and $T \subseteq V(M)$ be a vertex set such that all vertices of $M- T$ have degree two in $M$. We say that $(M, T)$ is a \emph{topological minor model} (TM-model) of $H$ in $G$ if $H$ is isomorphic to the
graph obtained from $M$ by dissolving in $M$ all vertices of $M- T$. Here, \emph{dissolving} a vertex of degree two means deleting it while making its neighbors adjacent.  We further say that for the injective mapping $f: V(H) \to V(G)$ induced by the isomorphism above, $(M, f)$ is a \emph{topological minor embedding} (TM-embedding) of $H$ in $G$. The \emph{size} of a TM-model $(M, T)$ and of a TM-embedding $(M, f)$ is the number of vertices in $M$.
\end{definition}
 
 In other words, in a TM-embedding, 
 the vertices of $H$ are mapped to distinct vertices of $G$ such that the edges of $H$ could be mapped to internally vertex-disjoint paths of $G$ connecting the corresponding pairs of vertices of $G$. Moreover, the mapping from edges of $H$ to paths in $G$ can be ``read off'' from the corresponding model subgraph $M$.

To capture topological minor models containing specific terminal vertices, like  finding a longest cycle containing a specified terminal vertex set $T$, we also consider a more general variant of TM-embedding. In this variant, every vertex of $H$ is assigned a list of vertices of $G$ where it could be mapped. 
 
 \begin{definition}[List TM-embedding] For graphs $H$, $G$, and a mapping $L : V(H) \to 2^{V(G)}$, we say that TM-embedding $M$, $f: V(H) \to V(G)$ is a  \emph{list topological minor embedding} (list TM-embedding), if  for each $h \in V(H)$, $f(h) \in L(h)$.
 \end{definition}
 
The following theorem establishes that computing the list TM-embedding of maximum size is \classFPT, that is, solving the \textsc{Maximum List TM-Embedding} problem,
 parameterized by the size of the maximum independent set in $G$ and the size of the graph $H$. 
The algorithm, for a given parameter $k$,  either computes a TM-embedding of maximum size in \classFPT time with respect to $k$, or correctly reports that the input graph does not fall within the appropriate domain, meaning it has an independence number greater than $k$. Such algorithms are what Raghavan and Spinrad refer 
to as \emph{robust} \cite{RaghavanS03}.
More precisely, 
 
\begin{restatable}{theorem}{thmmaxmodel}
    \label{thm:max_model}
    There is an algorithm that, given graphs $H, G$,   a list assignment $L : V(H) \to 2^{V(G)}$ and an integer parameter $k \ge 1$, in time $2^{|H|^{\Oh(k)} \cdot k^{\Oh(k^2)}} \cdot|G|^{\Oh(1)}$
    \begin{itemize}
    \item either correctly reports that there is no list TM-embedding of $H$ in $G$, or
    \item computes a TM-embedding of $H$ in $G$ of the maximum size, where the maximum is taken over all list TM-embeddings of $H$ in $G$, or
    \item outputs an independent set in $G$ of size $k$.
    \end{itemize}
    \end{restatable}

 Similar to \Cref{thm:above_GM}, the question concerning parameterization by \(\alpha(G)\) deviates from the standard setup used in structural parameterized complexity, where most algorithmic results rely on the fact that the problem of computing or approximating the corresponding graph structural parameter is in \classFPT. For example, deciding whether a graph contains a Hamiltonian cycle is \classFPT when parameterized by the treewidth of the graph, for the following reason. Given an input integer \(k\), there exist algorithms (either exact or approximate) that either correctly decide that the treewidth of the input graph is greater than \(k\), or construct a tree decomposition of width \(\Oh(k)\) in time \(f(k)\cdot n^{\Oh(1)}\)~\cite{Bodlaender96,Korhonen21,RobertsonS86}. The resulting tree decomposition can then be used to decide whether the graph is Hamiltonian in time \(g(k)\cdot n^{\Oh(1)}\). In contrast, in our setting, the parameter under consideration does not admit an \classFPT approximation~\cite{LinRSW22,SK22}, which necessitates a different approach.

  \Cref{thm:max_model} implies that \textsc{Maximum List TM-Embedding} is \classFPT\ with respect to the combined parameter \(|H| + \alpha(G)\).  
For any integer \(k > \alpha(G)\), the algorithm of \Cref{thm:max_model} either outputs a maximum-size list TM-embedding or correctly determines that no such embedding exists.  
The fact that \(\alpha(G)\) cannot be computed efficiently can be circumvented by the following trick.  
To find a list TM-embedding of \(H\) in \(G\), we run the algorithm of \Cref{thm:max_model} for increasing values of the parameter \(k\), starting with \(k = 1\), until the algorithm either finds an embedding or concludes that no appropriate embedding exists.

To our knowledge, neither list TM-embeddings nor the \textsc{Maximum List TM-Embedding} problem have been studied so far. However, these problems are natural generalizations of several vital problems in graph algorithms. In particular,  \Cref{thm:max_model} implies the following corollaries.

\begin{itemize}
\item When graph $H$ consists of a path $P_2$ on two vertices whose lists are all vertices of $V(G)$, then this is the problem of finding a longest path in $G$. If this path contains $|V(G)|$ vertices, this is a Hamiltonian path. For $H=C_3$, we have the problem of finding a  longest cycle in $G$, the generalization of computing a Hamiltonian cycle in $G$. By \Cref{thm:max_model}, these tasks could be performed in time  $f(\alpha(G)) \cdot |G|^{\Oh(1)}$ which implies that the problem of finding a longest path/cycle is \classFPT parameterized by $\alpha(k)$. 

\item  A graph \(G\) is \emph{Hamilton-connected} (also known as Hamiltonian connected) if every two vertices of \(G\) are connected by a Hamiltonian path; see, e.g., \cite[p.~61]{BondyM}.  
Checking whether a graph is Hamilton-connected thus reduces to computing, for each pair \(u, v \in V(G)\), a maximum-size TM-embedding of \(H = P_2\) with lists \(\{u\}\) and \(\{v\}\), which can be done in time \(f(\alpha(G)) \cdot |G|^{\Oh(1)}\) by \Cref{thm:max_model}.

\item  In  \textsc{Path Cover}, the task is to cover  all vertices of $G$ by at most $p$ vertex-disjoint paths.  By Gallai--Milgram's theorem, 
 if $p\geq \alpha(G)$ then a covering by at most $p$ paths always exists and we can assume that $p<\alpha(G)$. We can solve the problem by taking $H$ to be disjoint unions of at most $p$ disjoint copies of $K_2$
 assigning to each vertex of $H$ the list containing all vertices of $V(G)$.  Thus, \Cref{thm:max_model} yields  an algorithm of running time $f(\alpha(G)) \cdot |G|^{\Oh(1)}$ solving  this problem. Of course, it also follows from \Cref{thm:above_GM}.  For \textsc{Cycle Cover}, that is when the graph $H$ consists of at most $p$ disjoint copies of $C_3$, 
we have the running time  $f(\alpha(G)+p) \cdot |G|^{\Oh(1)}$.

\item Consider  the following generalization of  \textsc{Path Cover}. In the \textsc{Hamiltonian-$\ell$-Linkage} problem, we are given $\ell$ pairs of vertices $(s_1,t_1),\dots, (s_\ell, t_\ell) $. The task is to connect these pairs by disjoint paths that altogether traverse all the vertices of the graph. For $\ell=1$, this is the problem of deciding whether a graph for selected vertices $s$ and $t$ contains a Hamiltonian  $s,t$-path. \textsc{Hamiltonian-$\ell$-Linkage} could be encoded as \textsc{Maximum List TM-embedding} with 
$H$ consisting of   $\ell$ disjoint copies of $P_2$.  Let the vertices of these $P_2$'s be $\{x_i, y_i\}$, $i\in \{1,\dots, \ell\}$. Then the list mapping $L$ assigns $\{s_i\}$ to $x_i$ and $\{t_i\}$ to $y_i$.  By \Cref{thm:max_model}, \textsc{Hamiltonian-$\ell$-Linkage} is solvable in time  $f(\alpha(G)+\ell) \cdot |G|^{\Oh(1)}$.   By playing with different graphs $H$ in \textsc{Maximum List TM-Embedding} and \Cref{thm:max_model}, one easily obtains algorithms computing spanning subgraphs with various properties like a spanning tree with at most $\ell $ leaves. 

\item  Another interesting scenario is the maximization variant of the \textsc{$T$-Cycle} problem. 
In the \textsc{$T$-Cycle} problem, we are given a graph $G$ and a set $T\subseteq V(G)$ of terminals. The task is to decide whether there is a cycle passing through all terminals \cite{BjorklundHT12,Kawarabayashi08,Wahlstrom13} (in \cite{BjorklundHT12} authors look for the minimum-length cycle). We can reduce the maximization variant  to finding a list TM-embedding  of maximum size by enumerating all $|T|!$ orderings in which terminal vertices occur in an optimal solution. For each guessed ordering, we apply \Cref{thm:max_model} for $H$ being a cycle on $|T|$ vertices such that each vertex of $H$ is assigned to precisely one terminal vertex, and the cyclic ordering of $H$ corresponds to the guess.  This brings us to the  $   f(\alpha(G)+|T|) \cdot |G|^{\Oh(1)}$ algorithm computing a cycle of maximum length containing all terminals in $T$. 
Similar arguments hold for a more general colored variant of the problem. In this variant, we are given a graph $G$, and some of the vertices of $G$ are assigned colors from $1$ to $L$. Then, the task is to find a longest cycle containing the maximum number of colors \cite{FominGKSS23}.  Note that an optimal cycle can contain many vertices of the same color.  By guessing the colors participating in a  solution cycle $C$ and the ordering of their first appearance in some cyclic ordering of $C$, we construct a cycle $H$ on at most $L$ vertices. Every vertex of $H$ is assigned the list of vertices of $G$ corresponding to the guessed color.  This brings us to the algorithm of running time   $  f(\alpha(G)+L) \cdot |G|^{\Oh(1)}$. 
 \end{itemize}

We remark that the result of \Cref{thm:max_model} is tight in the sense that  \textsc{Maximum List TM-Embedding} is trivially \classParaNP-hard when parameterized by either $\alpha(G)$ or $|H|$ only. The problem is \classNP-hard for $H=K_2$ because of the \classNP-completeness of  \textsc{Hamiltonian Path}~\cite{GareyJ79}. Furthermore, observe that finding a path with $n$ vertices in a bipartite $n$-vertex graph $G$ is equivalent to finding the complement of the $n$-vertex path in the complement of $G$ which is cobipartite.  Since  \textsc{Hamiltonian Path} is \classNP-complete on bipartite graphs~\cite{GareyJ79} and the independence number of a cobipartite graph is at most two,  \textsc{Maximum List TM-Embedding} is \classNP-hard for graphs $G$ with $\alpha(G)\leq 2$ when $H$ is a part of the input.

\subsection{Overview of \Cref{thm:above_GM,thm:max_model}}
The proof of  \Cref{thm:above_GM} uses \Cref{thm:max_model} as a black box. We start with sketching the proof of  \Cref{thm:above_GM}. 

\medskip\noindent\emph{Overview of the proof of \Cref{thm:above_GM}:}
To sketch the proof of \Cref{thm:above_GM}, we first recall the standard argument for Gallai–Milgram's theorem in undirected graphs. The key observation is that if a path cover of \(G\) has more than \(\alpha(G)\) paths, then there must be an edge connecting the endpoints of two distinct paths. By merging these paths, the cover size decreases by one. Thus, starting from the trivial cover consisting of \(\lvert V(G) \rvert\) single-vertex paths, we can iteratively obtain a path cover of size at most \(\alpha(G)\). This procedure runs in polynomial time. Importantly, we do not need to know \(\alpha(G)\) in advance; the algorithm simply keeps merging paths with adjacent endpoints until no such pair remains.

The merging procedure used to prove Gallai–Milgram's theorem does not provide any additional information beyond \(p \le \alpha(G)\), where \(p\) is the size of the constructed path cover \(\mathcal{P}\). To move beyond this bound, we extract more structural information from a path cover once the Gallai–Milgram merging process can no longer continue. For this purpose, we introduce the notions of \emph{usual} and \emph{special} paths. A path is called \emph{usual} if its endpoints are distinct and no edge connects them in \(G\); otherwise, the path is \emph{special}.  

Intuitively, usual paths are “open-ended”: their endpoints are free, and since no edge ties them together, both endpoints can safely be included in an independent set. Usual paths are therefore useful for building independence, as each such path potentially contributes two vertices to the independent set. 
More formally, assume there are at least \(k\) usual paths in \(\mathcal{P}\) and that no edges connect the endpoints of two distinct paths \(P_i, P_j \in \mathcal{P}\). Then, if for each special path (of which there are at most \(|\mathcal{P}|-k = p-k\)) we take one endpoint, and for each usual path we take both endpoints, we obtain an independent set \(I\) of size at least \((p-k)+2k \ge p+k\). The path cover \(\mathcal{P}\) together with the independent set \(I\) forms a certificate in \Cref{thm:above_GM} for the case when \(\mathcal{P}\) contains at most \(\alpha(G)-k\) paths.

Thus, the first step of the proof of \Cref{thm:above_GM} is as follows. We start from a trivial path cover \(\mathcal{P}\) consisting of single-vertex paths and perform “Gallai–Milgram mergings”. That is, we merge paths in \(\mathcal{P}\) exhaustively as long as there exists a pair of paths with adjacent endpoints. When no such merge is possible, if at least \(k\) paths are usual, the algorithm constructs \(I\) and outputs a valid solution.

In the remainder of the proof, the algorithm considers the case when in the constructed path cover \(\mathcal{P}\):
\begin{itemize}
\item[\emph{(i)}]
 \emph{No two endpoints of distinct paths are adjacent}, and
 \item[\emph{(ii)}]
  \emph{fewer than \(k\) paths are usual, so that the remaining paths are special. }
  \end{itemize}

  Informally, special paths are “closed”: they either consist of a single vertex (a trivial path), or have two vertices,
  or are of length at least two and their endpoints are connected by an edge, so that the path together with this edge forms a cycle. In this situation, an edge between any two vertices of different special paths plays the same role as an edge between their endpoints: it allows us to merge the two paths and cover all their vertices with a single path. Therefore, while special paths are not very useful for constructing independent sets, they are extremely convenient for merging. Our algorithm exploits this property by exhaustively merging special paths until no edges remain between them. 

As a result, we obtain the following structural property of the path cover:  
\begin{itemize}
\item[\emph{(iii)}]
\emph{if two distinct paths \(P_i, P_j \in \mathcal{P}\) are special, then there is no edge between \(V(P_i)\) and \(V(P_j)\) in \(G\).}  
  \end{itemize}

We perform a similar merging trick in the case when \(P_i\) is special, \(P_j\) is usual, and there is an edge from \(V(P_i)\) to an endpoint of \(P_j\). This yields the fourth property:  
\begin{itemize}
\item[\emph{(iv)}] \emph{If a path \(P_i \in \mathcal{P}\) is special, then there is no edge between \(V(P_i)\) and any endpoint of any other path \(P_j \in \mathcal{P}\).}  
\end{itemize}

Together, properties (i)–(iv) provide additional structural insight into the path cover \(\mathcal{P}\) and, consequently, into the graph \(G\).
Armed up with these properties, our algorithm is now able to construct a larger independent set in $G$ than before.
For each usual path, the algorithm still takes its both endpoints into $I$.
For each special path $P_i\in \mathcal{P}$, the algorithm can now take two arbitrary non-adjacent vertices from $V(P_i)$ into $I$ provided that such vertices exist.
If $V(P_i)$ is instead a clique in $G$, the algorithm takes just a single arbitrary vertex into $I$.
The resulting set $I$ is independent by properties (i), (iii), (iv) of $\mathcal{P}$.
If at least $k$ special paths do not induce a clique in $G$, the size of $I$ is at least $p+k$, and this immediately resolves the problem.
Thus, the only remaining scenario is that, 
\begin{itemize}
\item[\emph{(v)}] 
\emph{except for at most \(k\) special paths, the vertex set of each of the special paths is a clique in \(G\). }
\end{itemize}

 \begin{figure}
	\centering
	(a)\;
	\includegraphics{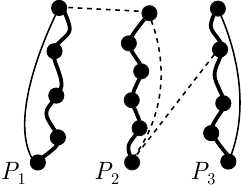}
	\;\;\;\;\;(b)\;
	\includegraphics{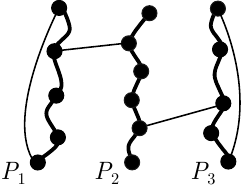}
	\;\;\;\;\;(c)\;
	\includegraphics{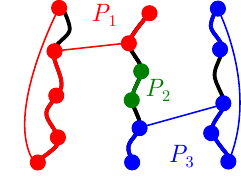}
	\caption{Example of a three-path transformation. In part (a), $P_2$ is a usual path, while $P_1$ and $P_3$ are special paths. 
		Dashed lines represent some of the absent edges.
		Part (b) represents the same paths connected by edges required for the transformation.
		Part (c) shows how new paths are formed. $P_1$ and $P_3$ become usual paths, while $P_2$ can become special or even disappear completely.}\label{fig:intro-three-path-transform}
\end{figure}

However, this structural insight alone addresses only a part of the problem. To progress further, we need a more sophisticated transformation of the paths in \(\mathcal{P}\). Rather than merging two paths, we now consider \emph{two special paths} and  \emph{one usual path} such that there are two edges (with no shared endpoints) connecting the special paths to the usual path. Using these edges, the transformation creates three new paths, at least two of which become usual. As a result, the number of special paths in \(\mathcal{P}\) decreases without increasing the total number of paths (see \Cref{fig:intro-three-path-transform}).

If   all these  transformations cannot be applied, we arrive at the following situation. There must exist a vertex set \(S\) of size smaller than \(k\) that separates all usual paths from at least \(p - 4k\) special paths, each of which induces a clique in \(G\). In the auxiliary \Cref{lemma:selected_cliques}, we demonstrate how this small separator \(S\) can be exploited to remove irrelevant cliques from \(G\) without affecting the size of an optimal path cover or the independence number. Applying this polynomial-time procedure ensures that, in the end, at most \(\Oh(k^2)\) special paths remain in \(\mathcal{P}\).

By applying transformations and irrelevant clique removals, we end up in the situation where \(p = \Oh(k^2)\). In this case, we invoke \Cref{thm:max_model} to compute an optimal path cover of \(G\), using \(p + k\) as its parameter. If \Cref{thm:max_model} does not return an optimal path cover, it instead provides an independent set of size \(p + k\). In either outcome, the algorithm obtains a valid solution for \Cref{thm:above_GM}.

All steps of the algorithm are performed in polynomial time, except for the call to \Cref{thm:max_model}, resulting in running time $2^{k^{\Oh(k^4)}} \cdot n^{\Oh(1)}$.

\medskip\noindent\emph{Overview of the proof of \Cref{thm:max_model}:}
We focus on outlining the main ideas of the proof specifically for Hamiltonian-\(\ell\)-linkages. The more general extension to arbitrary list TM-embeddings is similar but requires additional technical details.

To begin, we sketch an \classXP algorithm that solves \textsc{Hamiltonian-\(\ell\)-Linkage} in time \(n^{f(\alpha(G)+\ell)}\). Although this algorithm does not achieve the desired complexity, it is instructive for understanding the proof.
Suppose we are given a parameter \(k \ge \alpha(G)\). (This assumption is acceptable for an \classXP algorithm because \(k\) can be computed by brute force in \(\Oh(n^{\alpha(G)})\) time.) 
The \classXP algorithm is recursive and is inspired by the 
  Chv\'atal-Erd\H{o}s's theorem on highly connected graphs, see \Cref{prop:chvatal}. Informally, Chv\'atal-Erd\H{o}s's theorem states that if a graph lacks a large independent set but is highly connected, it necessarily contains a Hamiltonian path.
   If its connectivity is low, we find a small vertex cut, delete it, and then recursively process the connected components of the resulting graph. Because a Hamiltonian path might traverse each component multiple times, we find it more convenient to work with {\em Hamiltonian-$\ell$-linkages}, that is,  linkages consisting of vertex-disjoint paths connecting specified pairs of vertices and covering all vertices of the graph. 
To handle such cases, one can strengthen Chv\'atal-Erd\H{o}s's theorem to apply to Hamiltonian-linkages. 
 Following the foundational work of Bollobás and Thomason \cite{BollobasThomason1996}, there have been numerous advances in understanding how linkages relate to connectivity. However, our setting additionally requires that the linkages are Hamiltonian. While no existing results directly address this requirement, we adapt the result of Thomas and Wollan~\cite{Thomas2005} on linkages in highly connected graphs to show that if a graph \(G\) is \(\Omega(\alpha(G) \cdot \ell)\)-vertex-connected, then it is Hamiltonian-\(\ell\)-linked. In other words, there exists a path cover of size \(\ell\) that links any prescribed set of endpoints.
  
The algorithm proceeds as follows. If the input graph $G$ is $\Omega(k\cdot \ell)$-connected, then  the algorithm correctly reports that $G$ is Hamiltonian-$\ell$-linked.
Otherwise, if the connectivity is lower, there exists a minimum vertex separator  $X$ of $G$ with $\Oh(k\cdot\ell)$ vertices. Each connected component of $G-X$ has an independence number at most $\alpha(G)-1$, and the algorithm proceeds recursively on each component. For that, the algorithm performs an exhaustive search of how the solution paths are split into parts by $X$.
For every connected component of $G-X$ and each split, the algorithm also performs an exhaustive search of entry and exit points of the parts of the paths in the component. 
Since the number of parts is at most $\ell+2|X|$, this constitutes $n^{\Oh(\ell+|X|)}$ possible choices. 
The algorithm then treats these points as new terminals and makes a recursive call for each connected component. 
Restricting the recursion depth by $\alpha(G)$, the recursion reaches the subproblem where the graph has sufficiently high connectivity without significantly blowing up the number of paths. Overall, this yields a running time of  $n^{f(\alpha(G)+\ell)}$.

Since the \classXP algorithm exploits small separators separating highly connected components, the first natural idea that comes to mind is to check whether the generic tools about cuts from the toolbox of parameterized algorithms could be leveraged to turn it into an \classFPT algorithm.
The recursive understanding technique introduced by Chitnis et al.~\cite{ChitnisCHPP16} is arguably the most powerful tool for designing \classFPT algorithms for such types of problems about cuts, as illustrated by the meta-theorem of Lokshtanov et al.~\cite{LokshtanovR0Z18}. Unfortunately, recursive understanding does not appear to be particularly helpful in our setting for the following reason:  in order to apply recursive understanding, one must handle the so-called \((p,q)\)-unbreakable graphs
 as the “base case.”  However, for our problem, addressing  \((p,q)\)-unbreakable graphs is not simpler than handling the general case.

Thus, we need a different method of decomposing the graph. 
To prevent the \classXP running time, the \classFPT algorithm avoids recursive calls to itself.
Instead, it encapsulates the recursion in a subroutine that in time $f(k,\ell)\cdot n^{\Oh(1)}$ either outputs an independent set of $G$ of size $k$ or finds a vertex set $S$ of size $k^{\Oh(k)}\cdot \ell$ with the following property: Each connected component of $G-S$ is $\Omega(k \cdot (\ell+|S|))$-connected. Set $S$  breaks solution paths into at most $\ell+2|S|$ parts.
 Some parts belong to $G[S]$, while others belong to $G-S$. Since the size of $S$ is in $k^{\Oh(k)}\cdot \ell$,  the paths in $G[S]$ could be enumerated in \classFPT  (in $k$ and $\ell$) time.
 However, guessing the entry and exit points of these parts, the way we describe it for the \classXP algorithm, is  too time-consuming.  
 Here is precisely the moment when the property of the decomposition comes into play. Each connected component of $G-S$ is highly connected and thus is  Hamiltonian-$(\ell + 2|S|)$-linked. Therefore, for any selection of the paths' endpoints,  the required collection of vertex-disjoint paths spanning all component vertices always exists (unless empty). Therefore, it is sufficient for the algorithm to enumerate all possible ``patterns'' of paths' behavior between $S$ and the components of $G - S$ without enumerating the entry-exit endpoints in $G - S$.  The latter task reduces to computing a maximum matching in an auxiliary graph. This way, the whole enumeration can be done in \classFPT time.

Another complication arises because we do not assume \(\alpha(G) < k\). (Recall that we cannot compute $\alpha(G)$ in \classFPT time.)  Instead, we aim either to find a desired solution or an independent set of size \(k\). This requirement calls for a \emph{constructive} variant of the high-connectivity case. Specifically, given an \(\Omega(k \cdot \ell)\)-connected graph and \(\ell\) pairs of endpoints, we must be able to construct \(\ell\) disjoint paths connecting these endpoints and covering all vertices of the graph, or else find an independent set of size \(k\).
For this purpose, we rely on a combinatorial result by Thomas and Wollan~\cite{Thomas2005}, which states that in a \(10\ell\)-connected graph, one can always find \(\ell\) vertex-disjoint paths connecting any given pairs of endpoints. However, the proof in~\cite{Thomas2005} is non-constructive---it builds on certain properties of a minimal counterexample, and it is not obvious how to transform that argument into an algorithm that yields the required paths. 
To circumvent this issue, we make use of the fact that our graphs have bounded independence number. Through a Ramsey-based argument, we either locate an independent set of size \(k\) or find a large clique. In the latter scenario, we can apply Menger’s theorem to connect the endpoints to the clique, thereby constructing the required \(\ell\) paths.

As a side remark, several combinatorial results---including the aforementioned work of Thomas and Wollan~\cite{Thomas2005}---guarantee the existence of \(\ell\) vertex-disjoint paths between specified pairs of endpoints in a graph with connectivity \(\Omega(\ell)\). However, there is a notable shortage of simple and efficient algorithms to \emph{compute} such vertex-disjoint paths in \(\Omega(\ell)\)-connected (or even \(\ell^{\Omega(1)}\)-connected) graphs. Our contribution in this direction may therefore be of independent interest.

Finally, for \Cref{thm:max_model}, we additionally generalize the argument above from Hamiltonian-$\ell$-linkages to arbitrary list TM-embeddings. Here, once more, the non-recursive structure of the algorithm proves helpful: With the same choice of the set $S$, we enumerate all possible behaviors of the target model on $S$ and between $S$ and the connected components of $G - S$. Again, we avoid enumerating concrete terminals and connecting vertices in $G - S$, fixing only the general connection pattern to $S$. While considerably more technical, the enumeration can still be done in \classFPT time. Then, similarly to Hamiltonian-$\ell$-linkages, it can be shown that one can always find a spanning list TM-model in a highly connected component (or an independent set of size $k$), with arbitrary locations of the terminals. 
 
\subsection{Related Work}  
The interplay among the independence number, connectivity, Hamiltonicity, and path covers has long been a popular theme in graph theory, with historical roots going back to the works of Dilworth \cite{Dilworth1950}, Gallai and Milgram \cite{GallaiM60}, Nash-Williams \cite{NashWill71}, and Chvátal and Erdős \cite{chvatal1972note}. Surveys by Jackson and Ordaz \cite{MR1077136}, Bondy \cite{MR1373656}, and Kawarabayashi \cite{kawarabayashi2001survey} offer excellent overviews of this area. However, most of the research thus far has been centered on establishing combinatorial conditions for Hamiltonicity (or related properties), rather than  on algorithmic extensions.

\Cref{thm:above_GM} contributes to a growing line of parameterized algorithms that extend classic theorems from extremal graph theory. Examples include algorithmic adaptations of Poljak–Turzík’s theorem~\cite{MnichPSS14}, Dirac’s theorem~\cite{FominGSS21}, and Hajnal–Szemerédi’s theorem~\cite{gan2023algorithmic}. Such algorithmic extensions can be seen as part of a broader trend in parameterized complexity called \emph{above (or below) guarantee parameterization}, where one observes that:
the natural parameterization of a maximization or minimization problem by the solution size is not satisfactory if there is a lower bound for the solution size that is sufficiently large.
The term \emph{above guarantee parameterization} was introduced by Mahajan and Raman~\cite{MahajanRS09}, and it has proved fruitful for studying numerous fundamental problems in parameterized complexity and kernelization~\cite{AlonGKSY10,BezakovaCDF17,CrowstonJMPRS13,FominGLPSZ21,FominGSS24,FominGSS26,GargP16,DBLP:journals/mst/GutinKLM11,GutinIMY12,GutinP16,GutinRSY07,LokshtanovNRRS14,GutinMnich22}.
A feature distinguishing  \Cref{thm:above_GM} from the previous work is that it provides ``tractability below an intractable parameter''.

The exploration of long paths and cycles in graphs represents a well-established and actively pursued research direction within parameterized algorithms. Since the seminal introduction of the color-coding technique by Alon, Yuster, and Zwick \cite{AlonYZ95}, substantial progress in algorithmic methodologies has been achieved. Notably, a range of powerful algorithmic methods has been developed, including those by Bj{\"{o}}rklund \cite{DBLP:journals/siamcomp/Bjorklund14}, Koutis \cite{Koutis08}, Williams \cite{Williams09}, and Zehavi \cite{Zehavi16}. This progress has also contributed to our understanding of significantly more general and complex problems, such as topological minor embedding \cite{GroheKMW11,FominLP0Z20}.
A significant portion of parameterized algorithms deals with optimization problems on sparse graphs when specific graph parameters characterize the sparsity and structure of the graph \cite{cut-and-count,rank-treewidth,rank-treewidth2,FominLPS16,BasteST23}, see also \cite{cygan2015parameterized}. Such parameters include treewidth, treedepth, or minimum vertex cover size.  On the other hand, for more ``general'' parameters Hamiltonicity becomes intractable.   For example, the problem of deciding graph Hamiltonicity is \classW1-hard when parameterized by the clique-width of a graph \cite{FominGLS10} and is 
\classParaNP-hard
parameterized by $\alpha$-treewidth (see \cite{DallardMS24} for the definition) as the problem is \classNP-complete on split graphs~\cite{golumbic2004algorithmic}.

The existence of a Hamiltonian cycle remains NP-complete on planar graphs~\cite{garey1976planar}, circle graphs~\cite{damaschke1989hamiltonian} and for several other generalizations of interval graphs~\cite{bertossi1986hamiltonian}, for chordal bipartite graphs~\cite{mu96} and for split (and therefore also for chordal) graphs~\cite{golumbic2004algorithmic}, but is solvable in linear time in interval graphs~\cite{keil1985finding} and in convex bipartite graphs~\cite{mu96}.
The existence of a Hamiltonian path can be decided in polynomial time for cocomparability graphs~\cite{damaschke1991finding} and for circular arc graphs~\cite{damaschke1993paths}. 

Many of the above mentioned graph classes are hereditary, i.e., closed in the induced subgraph order, and as such can be described by collections of forbidden induced subgraphs. This has led to carefully examining $H$-free graphs, i.e., graph classes with a single forbidden induced subgraph. For many graphs $H$, the class of $H$-free graphs has nice structural properties (e.g., $P_4$-free graphs are exactly the cographs) or their structural properties can be used to design polynomial algorithms for various graph theory problems (as a recent example, cf. colouring of $P_6$-free graphs in~\cite{Spirkl2019}).
  For Hamiltonian-type problems, consider three-vertex graphs $H$. For $H=K_3$, both Hamiltonian cycle and Hamiltonian path are NP-complete on triangle-free graphs, since they are NP-complete on bipartite graphs~\cite{mu96}. For $H=3K_1$, the edgeless graph on three vertices, both Hamiltonian cycle and Hamiltonian path are polynomial time solvable~\cite{Duf1981}. The remaining two graphs, $P_3$ and $K_1+K_2$, are induced subgraphs of $P_4$; thus, the corresponding class of $H$-free graphs is a subclass of cographs, and as such a subclass of cocomparability graphs, in which the Hamiltonian path and Hamiltonian cycle problems are solvable in polynomial time~\cite{damaschke1991finding}. However, a complete characterization of graphs $H$ for which Hamiltonian path or Hamiltonian cycle problems are solvable in polynomial time (and for which they are NP-complete) is not in sight. Until now, the complexity has been open even for the next smallest edgeless graph $4K_1$. \Cref{thm:max_model}  answers this complexity question for all edgeless forbidden induced subgraphs.
In a companion paper 
Jedli\v{c}kov{\'{a}} and   Kratochv{\'{\i}}l \cite{jedlivckova2024structure} explicitly described situations 
for $\alpha(G)=3,4$ and $5$ in which a Hamiltonian path does exist.

The parameterized complexity of many graph problems remains insufficiently explored for parameters capturing a graph’s density—such as its independence number. It is surprising that, despite the substantial progress in understanding Hamiltonicity, connectivity, and topological minors on graphs of bounded treewidth (which are sparse), relatively little has been achieved for graphs of bounded independence number (which are typically dense). 
Only a handful of results (\classFPT or \classXP) address parameterization by a graph’s independence number. These include the \classXP algorithm of Ovetsky~Fradkin  and Seymour \cite{FradkinS15} for the edge-disjoint paths problem in directed graphs, the kernelization algorithms of Lochet et al.~\cite{LochetLM0SZ20} for various cycle-hitting problems in directed graphs, the subexponential parameterized algorithms of Misra et al.~\cite{MisraSSZ23} for digraphs with constant independence number, and the recent work of Maalouly, Steiner, and Wulf~\cite{MaaloulySW23} on \classFPT algorithms for exact matching.

\section{Preliminaries}\label{sec:preliminaries}
 
We consider simple undirected graphs without loops or multiple edges.  
The vertex set of a graph $G$ is denoted by $V(G)$, its edge set by $E(G)$. Edges are considered as two-element sets of vertices, thus we write $u\in e$ to express that a vertex $u$ is incident to an edge $e$.  For the sake of brevity, we write $uv$ instead of $\{u,v\}$ for the edge containing vertices $u$ and $v$. We say that $u$ is {\em adjacent to} $v$ if $uv\in E(G)$. The {\em degree} of a vertex is the number of other vertices adjacent to it. The subgraph of $G$ induced by vertices $S\subseteq V(G)$ will be denoted by $G[S]$. We use $G-S$ for $G[V(G)\setminus S]$.

The {\em independence number}  of a graph $G$, denoted by $\alpha(G)$, is the order of the largest edgeless induced subgraph of $G$. With the standard notion of $K_k$ being the complete graph with $k$ vertices and $G+H$ being the disjoint union of graphs $G$ and $H$, $\alpha(G)$ is equal to the largest $k$ such that $kK_1$ is an induced subgraph of $G$. A graph is called {\em $H$-free} if it contains no induced subgraph isomorphic to $H$.

A {\em path} in a graph $G$ is a sequence of distinct vertices such that any two consecutive ones are adjacent. The {\em length} of a path is the number of its edges. A {\em cycle} is formed by a path of length greater than 1 which connects two adjacent vertices. The path (cycle) is {\em Hamiltonian} if it contains all the vertices of the graph. The \emph{path cover number} of $G$, denoted by $\mathrm{pc}(G)$, is the smallest number of vertex disjoint paths that cover all vertices of $G$.

A graph is connected if any two vertices are connected by a path. Since the problems we are interested in are either trivially infeasible on disconnected graphs, or can be reduced to studying the components of connectivity one by one, we only consider connected input graphs in the sequel. 

A {\em vertex  cut} in a graph $G$ is a set $A\subset V(G)$ of vertices such that the graph $G-A=G[V(G)\setminus A]$ is disconnected. The {\em vertex connectivity} $c_v(G)$ of a graph $G$ is the order of a minimum cut in $G$, or $|V(G)|-1$ if $G$ is a complete graph. Since we will not consider edge connectivity, we will often omit the adjective when talking about the connectivity measure, we always have vertex connectivity in mind. 

Although we consider undirected graphs, when we talk about a path in a graph, the path itself is considered traversed in the direction from its starting vertex to the ending one. Formally, when we say that a path $P$ connects a vertex $x$ to a vertex $y$, then by $P^{-1}$ we denote the same path, but traversed from $y$ to $x$. This is important when creating a longer path by concatenating shorter ones.  

We build upon the following results of Chv\'atal and Erd\H{o}s, which we will later generalize to Hamiltonian linkages in  \Cref{thm:max_model}.

\begin{proposition} [Chv\'{a}tal \cite{chvatal1972note}] \label{prop:chvatal} 
Let $G$ be an $s$-connected graph. 
\begin{enumerate}
\item If $\alpha(G)<s+2$, then $G$ has a Hamiltonian path. 
\item If $\alpha(G)<s+1$, then $G$ has a Hamiltonian cycle.
\item If $\alpha(G)<s$, then $G$ is Hamiltonian connected (i.e., every pair of vertices is joined by a Hamiltonian path).
\end{enumerate}
\end{proposition}

\section{Proof of \Cref{thm:above_GM}: Path cover below Gallai-Milgram}\label{sec:GM}
In this section, we prove~\Cref{thm:above_GM}. 
Recall that in the \textsc{Path Cover} problem, we are given a graph $G$ and integer $p\ge 1$, and the task is to decide whether it is possible to cover $G$ with at most $p$ vertex-disjoint paths.
Throughout this section, by $\operatorname{pc}(G)$, we denote the minimum number of vertex-disjoint paths required to cover $G$.
A well-known theorem by Gallai and Milgram  \cite{GallaiM60} guarantees that $\operatorname{pc}(G)\leq \alpha(G)$. When we combine this theorem with \Cref{thm:max_model}, we obtain an algorithm solving \textsc{Path Cover} in time $2^{\alpha(G)^{\Oh(\alpha(G)^2)}}\cdot |G|^{\Oh(1)}$. 
Our contribution in \Cref{thm:above_GM} makes a step further by establishing the tractability of \textsc{Path Cover} for parameterization $k=\alpha(G)-p$, that is ``below'' the bound of Gallai-Milgram's theorem. 
We restate \Cref{thm:above_GM} here for the reader's convenience.

\thmabovgm*

We require two auxiliary results to prove \Cref{thm:above_GM}.
\subsection{Small separators: Reducing connected components}
The first result is dealing with small vertex separators of $G$. It provides a useful tool that 
for a small vertex separator $S$ of $G$,  reduces the number of connected components in $G-S$ and the value of $\operatorname{pc}(G)$ simultaneously.

\begin{lemma}\label{lemma:selected_cliques}
	Let $G$ be a graph,  $S\subseteq V(G)$, and let  $C_1, C_2, \ldots, C_t\subseteq V(G)\setminus S$ be disjoint sets such that for each $i\in[t]$, $C_i$ is a clique of $G$ inducing a connected  component of $G-S$.
	Then there is a polynomial time algorithm that, given $G$, $S$, and $\{C_i\}_{i\in[t]}$, outputs a set of at most $2|S|^2$ indices $X\subseteq [t]$ such that $G$ has a  minimum path cover containing a path with vertex set $C_i$ for each $i\in[t]\setminus X$.
\end{lemma}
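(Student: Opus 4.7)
The plan is to select, for every ordered pair $(s_1,s_2)\in S\times S$, up to two ``representative'' cliques that could play a structural role in some minimum path cover when $s_1,s_2$ are the $S$-endpoints of a segment routed through a clique; their union is $X$, giving $|X|\le 2|S|^2$. Concretely, for $s_1\ne s_2$ I would define $A_{s_1,s_2}$ as the set of $i\in[t]$ such that $G[C_i\cup\{s_1,s_2\}]$ admits a Hamiltonian $s_1s_2$-path (equivalently, because $C_i$ is a clique, there exist $v_1\ne v_2\in C_i$ with $s_1v_1,s_2v_2\in E(G)$, or $C_i=\{v\}$ with $s_1v,s_2v\in E(G)$), and for $s_1=s_2=s$ put $A_{s,s}=\{i:N_G(s)\cap C_i\ne\emptyset\}$, playing the role of ``endpoint'' cliques with $s$ as external neighbour. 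Picking arbitrarily up to two elements from each $A_{s_1,s_2}$ is polynomial and yields $X$.

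For correctness, I would start from any minimum path cover $\mathcal{P}^*$ and first canonicalise it via two weakly-path-nonincreasing operations: merge two standalone paths inside the same $C_j$ into a single Hamiltonian path on their union, and absorb any standalone sub-path inside $C_j$ into a coexisting $S$-entering path of $C_j$ by rerouting through $C_j$ (both valid because $C_j$ is a clique). After exhaustive application, each $C_j$ is either \emph{trivial} (covered by a single Hamiltonian path with vertex set exactly $C_j$) or \emph{non-trivial} (some $S$-to-$C_j$ edge appears on a path), and $\mathcal{P}^*$ remains minimum.

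Every $s\in S$ has at most two path-neighbours in $\mathcal{P}^*$, hence at most $2|S|$ cliques are non-trivial, and each such $C_i$ comes with a natural \emph{role} $(s_1,s_2)\in S\times S$ given by the $S$-endpoints of its $S$-$C_i$ edges on paths; by definition $i\in A_{s_1,s_2}$. Among all canonicalised minimum covers I would choose one minimising $\Phi(\mathcal{P})=|\{i\notin X:C_i\text{ is non-trivial in }\mathcal{P}\}|$; assume for contradiction $\Phi(\mathcal{P}^*)>0$ and fix a witness $C_i$ of role $(s_1,s_2)$. Since $i\notin X$, $|A_{s_1,s_2}|\ge 3$, so $X$ contains two distinct representatives $C_{j_1},C_{j_2}\in A_{s_1,s_2}$. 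If at least one $C_{j_k}$ is currently trivial, the plan is to execute the direct exchange: reroute the path passing through $C_i$ so that it traverses $C_{j_k}$ instead via the Hamiltonian $s_1s_2$-path in $G[C_{j_k}\cup\{s_1,s_2\}]$ guaranteed by $j_k\in A_{s_1,s_2}$, and replace the old standalone of $C_{j_k}$ by a fresh standalone Hamiltonian path of $C_i$; this keeps $|\mathcal{P}^*|$ unchanged while strictly decreasing $\Phi$, contradicting the choice of $\mathcal{P}^*$.

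The main obstacle is the case in which both $C_{j_1},C_{j_2}$ are themselves non-trivial, necessarily in roles different from $(s_1,s_2)$. I plan to handle it by viewing the role-assignment of the current cover as a matching $M$ in the bipartite graph $B$ whose left vertices are the non-trivial cliques of $\mathcal{P}^*$, whose right vertices are the active roles, and where $i\sim(s'_1,s'_2)$ iff $i\in A_{s'_1,s'_2}$; a $\Phi$-decreasing modification of $\mathcal{P}^*$ then corresponds to an $M$-alternating sequence from $C_i$ to a trivial $X$-representative. Existence of such a sequence should follow from a counting argument: $M$ matches at most $2|S|$ cliques, whereas $X$ provides two representatives for every one of the $|S|^2$ roles, so the reservoir is large enough that Hall-type considerations force an augmenting sequence. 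Carefully realising this sequence as a valid chain of rerouting operations on $\mathcal{P}^*$ — at each step replacing a path segment through one clique by a Hamiltonian $s'_1s'_2$-segment through another, using that every $C_i$ in question is a clique so such Hamiltonian sub-paths are always available — is the most delicate step, and completes the contradiction, forcing $\Phi(\mathcal{P}^*)=0$ as required.
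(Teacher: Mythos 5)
Your proposal has the right instinct (mark a polynomial-in-$|S|$ set of cliques, then show a minimum path cover can be ``rerouted'' so that unmarked cliques are trivial), but the exchange argument has a genuine gap. The core problem is that your notion of a \emph{role} $(s_1,s_2)$ for a non-trivial clique is not well-defined. After your canonicalisation (merging and absorbing standalone segments), a non-trivial clique $C_i$ can still be intersected by \emph{several} path segments, each with its own pair of $S$-endpoints. For instance, take $S=\{s_1,s_2,s_3,s_4\}$ with no edges inside $S$, $C_1=\{a,b,c,d\}$ a clique, and $S$--$C_1$ edges $s_1a,s_2b,s_3c,s_4d$. Every minimum path cover of $G[S\cup C_1]$ uses two paths, each traversing $C_1$ (e.g.\ $s_1ab\,s_2$ and $s_3cd\,s_4$), and no merging or absorption applies. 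So the clique has two incomparable roles, your bipartite graph $B$ stops being one with a unique role per clique (and the ``matching'' $M$ stops being a matching, since several cliques or several segments can load the same right-vertex), and the potential $\Phi$ need not drop after a single segment swap. You yourself flag the realisation of the alternating sequence as the ``most delicate step'', and that is exactly where the argument is incomplete: you would need to handle simultaneous rerouting of multiple segments per clique and reprove that the chain of swaps is globally consistent and path-number-preserving.

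The paper sidesteps all of this by working at the \emph{edge} level rather than the segment/role level. It marks, for each $s\in S$, up to $2|S|$ cliques containing neighbours of $s$ (giving $|X|\le 2|S|^2$), and then takes a minimum path cover minimising the number of \emph{bad edges} (edges $uv$ with $u\in S$ and $v$ in an unmarked clique). If a bad edge $uv$ exists, the marking guarantees at least $2|S|$ marked cliques adjacent to $u$, while at most $2|S|$ cliques can carry any non-degenerate path at all; so some marked clique $C_j$ adjacent to $u$ is covered by a standalone (degenerate) path, whose endpoint $w$ can be taken adjacent to $u$ since $C_j$ is a clique. Deleting $uv$ and splicing in $uw$ keeps the number of paths and strictly reduces bad edges, a contradiction. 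This single local swap replaces your whole canonicalisation-plus-role-plus-Hall machinery, and in particular never needs a per-clique role to exist or a Hamiltonian sub-path through a replacement clique. If you want to repair your argument rather than adopt the paper's, the minimal fix is to argue edge by edge instead of clique by clique, and to mark per vertex of $S$ rather than per ordered pair $(s_1,s_2)$.
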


\begin{proof}
	The construction of $X$ is based on the following observation. We say that a path $P$ in a path cover $\mathcal{P}$ of $G$ is \emph{degenerate} if $V(P)=C_i$ for some $i\in [t]$.
	
	\begin{claim}\label{claim:few_used_cliques}
		In any minimum path cover $\mathcal{P}$ of $G$, at most $2|S|$ of the sets $C_1, C_2, \ldots, C_t$ contain vertices of non-degenerate paths of $\mathcal{P}$.
	\end{claim}
	
	\begin{claimproof}
		Notice that if $\mathcal{P}$ has no path with an edge $uv$ with $u\in S$ and $v\in C_i$ for some $i\in[t]$ then the clique $C_i$ is covered by a degenerate path by the minimality of $\mathcal{P}$. 
		Hence, if $C_i$ contains a vertex of a non-degenerate path then $\mathcal{P}$ has a path $P$ with an edge with one endpoint in $S$ and the other in $C_i$. 
		Each vertex in $S$ is incident to at most two edges in the paths from $\mathcal{P}$. Thus, at most $2|S|$ edges are in the path of $\mathcal{P}$ with one endpoint in $S$ and the other in one of $C_1, C_2,\ldots, C_t$.
		Hence, at most $2|S|$ of the sets can contain vertices of non-degenerate paths.
	\end{claimproof}
	
	To construct $X$, for each vertex $v\in S$, we \emph{mark} at most $2|S|$ cliques from $\{C_i\}_{i\in[t]}$ that contain neighbors of $v$. If there are at most $2|S|$ such cliques, then we mark all of them. Otherwise, we mark arbitrary $2|S|$ cliques containing neighbors of $v$. 
	Then $X$ is defined as the set of indices of the marked cliques. It is straightforward that $|X|\leq 2|S|^2$ and $X$ can be constructed in polynomial time.
	We claim that there is a minimum path cover $\mathcal{P}$ of $G$ such that for each $i\in[t]\setminus X$, there is a path $P\in \mathcal{P}$ satisfying $V(P)=C_i$.  In other words,  each unmarked clique is covered by a degenerate path. 
	
	Given a path cover $\mathcal{P}$ of $G$, we say that an edge $uv$ of a path $P\in\mathcal{P}$ is \emph{bad} if $u\in S$ and $v$ is a vertex of an unmarked clique. We show that there is a minimum path cover $\mathcal{P}$ of $G$ that does not contain paths with bad edges. The proof is by contradiction. We choose $\mathcal{P}$ as a minimum path cover of $G$ that minimizes the total number of bad edges in the paths. Suppose that there is $P\in\mathcal{P}$ that contains a bad edge $uv$ with $u\in S$ and $v\in C_i$ for some $i\in[t]$. We have that $C_i$ is unmarked and has a neighbor of $u$. Then, by the definition of the marking procedure, there are at least $2|S|$ marked cliques with neighbors of $u$. Because $C_i$ contains a vertex of the non-degenerate path $P$, we obtain that there is a marked clique $C_j$ for some $j\in[t]$ with a neighbor $w$ of $u$ such that $C_j$ is covered by a degenerate path $P'\in \mathcal{P}$ by \Cref{claim:few_used_cliques}. Because $C_j$ is a clique, we can assume without loss of generality that $w$ is an end-vertex of $P'$. Deleting the edge $uv$ separates $P$ into two paths $Q$ and $Q'$ having end-vertices $v$ and $u$, respectively. We use the fact that $uw\in E(G)$ and construct the path $\hat{Q}$ from $Q'$ and $P'$ by joining their end-vertices $u$ and $w$ by the edge $uw$. Notice that $Q$ and $\hat{Q}$ are vertex-disjoint and $V(Q)\cup V(\hat{Q})=V(P)\cup V(P')$. This allows us to construct the path cover $\mathcal{P}'$ by replacing $P$ and $P'$ in $\mathcal{P}$ by $Q$ and $\hat{Q}$.  Because the number of paths in $\mathcal{P}'$ is the same as in $\mathcal{P}$, $\mathcal{P}'$ is a minimum path cover. However, the number of bad edges in the path of $\mathcal{P}'$ is less than the number of bad edges in the paths of $\mathcal{P}$, contradicting the choice of $\mathcal{P}$. This proves that there is a minimum path cover $\mathcal{P}$ of $G$ that does not contain paths with bad edges. 
	
	To complete the proof, we remind that if a minimum path cover has no path with an edge $uv$ with $u\in S$ and $v\in C_i$ for some $i\in[t]$ then the clique $C_i$ is covered by a degenerate path. Thus, there is a minimum path cover $\mathcal{P}$ of $G$ such that for each $i\in[t]\setminus X$, there is a path $P\in \mathcal{P}$ satisfying $V(P)=C_i$. 
\end{proof}

\subsection{The case of $\operatorname{pc}(G)<k^C$}
The second auxiliary result is a ``lighter version'' of \Cref{thm:above_GM}.
It deals with the particular case of $\operatorname{pc}(G)<k^C$.

\begin{lemma}\label{lemma:solve_given_small_cover}
	There is an algorithm with running time $2^{k^{\Oh(k^{2C})}}\cdot \polyn$ that, given an $n$-vertex graph $G$, integer $k\ge 0$ and a path cover of $G$ with at most $\Oh(k^C)$ paths for constant $C\ge 1$, outputs a path cover $\mathcal{P}$ of $G$ and
	 \begin{itemize}
		\item either correctly reports that  $\mathcal{P}$ is a minimum-size path cover, or 
		\item outputs together with $\mathcal{P}$ an independent set of size $|\mathcal{P}|+k$ certifying that  $\mathcal{P}$ contains at most $\alpha(G)-k$ paths.	   
	\end{itemize}

\end{lemma}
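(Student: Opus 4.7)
The plan is to reduce the problem directly to Theorem~\ref{thm:max_model} by enumerating all possible shapes of the pattern graph. Let $p := |\mathcal{P}_0| = \Oh(k^C)$. The key observation is that a path cover of $G$ of size $q$ with exactly $a$ singleton paths and $b = q - a$ paths on at least two vertices is equivalent to a spanning list TM-embedding of $H_{a,b} := aP_1 \cup bP_2$ in $G$ under the trivial list assignment $L(v) := V(G)$ for every $v \in V(H_{a,b})$. The algorithm will therefore iterate over all $\Oh(p^2)$ pairs $(a,b)$ with $1 \le a + b \le p$, and for each pair invoke Theorem~\ref{thm:max_model} on $(G, H_{a,b}, L)$ with parameter $k' := p + k$.

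For a fixed pair, three outcomes are possible. If Theorem~\ref{thm:max_model} returns an independent set of size $k' = p + k$, the algorithm stops and outputs $\mathcal{P}_0$ together with this independent set; since $|\mathcal{P}_0| + k = p + k$, this matches the second bullet of the lemma. If Theorem~\ref{thm:max_model} returns a TM-embedding of maximum size equal to $|V(G)|$, the algorithm reads off the corresponding path cover of size $a + b$ (a singleton path from each $P_1$-image and a longer path from each dissolved $P_2$-image) and records it. In all other cases (no embedding exists, or the maximum-size embedding has fewer than $|V(G)|$ vertices), $G$ has no path cover with the given singleton/non-singleton split.

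If the enumeration finishes without any independent set being returned, the algorithm outputs the recorded path cover of smallest $a + b$ and claims it to be a minimum path cover. This is correct: $\mathcal{P}_0$ itself corresponds to one of the enumerated pairs, so at least one recording is made; and for every pair $(a, b)$ realizable by an actual path cover of $G$, Theorem~\ref{thm:max_model} must have returned a spanning embedding (since, under our case assumption, it never returned an independent set), so no smaller path cover is overlooked.

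There is no real conceptual obstacle beyond choosing the correct parameter for Theorem~\ref{thm:max_model}: fixing $k' = p + k$ is precisely what ensures that any independent set returned is large enough to certify $|\mathcal{P}_0| \le \alpha(G) - k$. The running time is dominated by the $\Oh(p^2)$ calls to Theorem~\ref{thm:max_model}, each taking $2^{|H|^{\Oh(k')}} \cdot 2^{k'^{\Oh(k'^2)}} \cdot |G|^{\Oh(1)}$; plugging in $|H| \le 2p$ and $p + k = \Oh(k^C)$ simplifies the total to the claimed $2^{k^{\Oh(k^{2C})}} \cdot n^{\Oh(1)}$.
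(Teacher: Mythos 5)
Your proposal is correct and takes essentially the same approach as the paper: both enumerate pattern graphs of the form $a\cdot P_1 \cup b\cdot P_2$ (the paper writes $jK_1 + (i-j)K_2$, which is the same thing) for all splits with $a+b\le p$, invoke \Cref{thm:max_model} with parameter $p+k$, output the independent set if one is returned, and otherwise extract the smallest spanning model as the optimal path cover. The only cosmetic difference is that the paper treats the edgeless-$G$ case separately and always insists on at least one $K_2$, whereas you absorb that case by allowing $b=0$.
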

\begin{proof}
	We present an algorithm $\mathcal{A}$.
	The algorithm assumes that $G$ 
	contains at least one edge as, otherwise, $pc(G)=n$ and the minimum-size path cover consists of trivial paths. 
	
	Given $G,k$ and a path cover $\mathcal{P'}$ of $G$ of size $p'$, the algorithm $\mathcal{A}$ tries to find an optimal path cover of $G$.
	To achieve that, the algorithm iterates an integer $i$ from $1$ up to $p'$ and an integer $j$ from $0$ to $i-1$.
	On each iteration, the algorithm constructs a graph $H_{ij}$ defined as the disjoint union of $j$ copies of $K_1$ and $i-j$ copies of $K_2$, that is, $H_{ij}$ contains $j$ isolated vertices and $i-j\geq 1$ edges forming a matching. Notice that 
	$G$ admits a path cover with $i$ paths if and only if $H_{ij}$ has a spanning TM-model in $G$ for some $j\in\{0,\ldots,i-1\}$ because $G$ has at least one edge and, therefore, at least one path in a path cover of minimum size is nontrivial.
	The algorithm $\mathcal{A}$ thus applies the algorithm of \Cref{thm:max_model} as a subroutine to $H_{ij}, G$ and trivial list assignment $L\equiv V(G)$, with integer parameter set to $p'+k=\Oh(k^C)$.
	The subroutine works in $$2^{|H_{ij}|^{\Oh(p'+k)}\cdot(p'+k)^{\Oh((p'+k)^2)}}\cdot \polyn=2^{k^{\Oh(k^{2C})}}\cdot\polyn$$ running time.
	
	If the outcome of the subroutine is an independent set $I$ in $G$ of size $p'+k$, then the algorithm outputs $\mathcal{P}'$ and $I$ as the solution.
	Since $|\mathcal{P}'|= p'$, $|I|\ge |\mathcal{P}'|+k$ is satisfied, so this solution satisfies the lemma statement.
	
	If the subroutine decides that $H_{ij}$ does not have TM-model in $G$ or the maximum TM-model of $H_{ij}$ in $G$ is not spanning, then $G$ cannot be covered with $i$ paths and $\mathcal{A}$ continues to the next iteration.
	If the subroutine gives a spanning TM-model of $H_{ij}$ in $G$, then the minimum path cover size of $G$ equals $i$ and $\mathcal{A}$ transforms this TM-model into a path cover of $G$ with $i$ paths.
	Then $\mathcal{A}$ outputs this optimal path cover of $G$ and stops.
	
	Note that $\mathcal{A}$ necessarily stops and produces a correct solution at some iteration, as it iterates $i$ over $\{1,2,\ldots,p'\}$ while $1\le \operatorname{pc}(G) \le p'$.
	The proof of the lemma is complete.
\end{proof}

\subsection{Proof of \Cref{thm:above_GM}}
We are ready to prove \Cref{thm:above_GM}.
	We describe an algorithm that starts from a trivial path cover of $G$ and then gradually transforms it.
	These transformations are indicated as \emph{reduction rules} and are performed in polynomial time.
	We show that each such rule does not increase the number of paths in the path cover.
	Moreover, each reduction rule necessarily reduces either the total number of paths or the number of \emph{special} (to be defined later) paths in the path cover.
	
	A series of consecutive applications of the reduction rules always lead the algorithm to cases when the algorithm can produce a solution satisfying theorem statement in $f(k)\cdot \polyn$ time and stop.
	We indicate such cases and the corresponding behavior of the algorithm as \emph{solution subroutines}.
	The algorithm runtime thus always ends with a solution subroutine run.
	
	In what follows, we present the \emph{reduction rules} and \emph{solution subroutines}. Each of the reduction rules reduces the number of paths in a path cover or reduces the number of some special paths. Solution subroutines output path covers and independent sets.  The rules and subroutines are given in the order in which the algorithm checks their applicability.
	That is, the algorithm applies a reduction rule or solution subroutine only if no reduction rule or solution subroutine before it can be applied.

	When we reach the situation where the obtained path cover $\mathcal{P}$ and graph $G$ are irreducible, we deduce that the instance has nice structural properties. In this case, the final solution subroutine calls the algorithms of \Cref{lemma:selected_cliques,lemma:solve_given_small_cover}.

	The algorithm starts with initializing the path cover $\mathcal{P}$ with $|V(G)|$ paths of length $0$.
	Suppose that at some stage of the algorithm, we have constructed a path cover with $m$ paths for some $m\geq 1$. 
	Let the current set of paths be $\mathcal{P} = \{P_1, P_2, \ldots, P_m\}$.
	We denote the endpoints of $P_i$ by $s_i$ and $t_i$.

	The first reduction rule  is straightforward and comes from the proof of Gallai-Milgram's theorem for undirected graphs.
	
	\begin{rrule}\label{rrule:distinct_path_endpoints}
		If there is an edge $uv$ in $G$ such that  $u \in\{s_i,t_i\}$ and $v\in\{s_j,t_j\}$ and $i\neq j$,
		then join $P_i$ and $P_j$ into a single path via $uv$.
	\end{rrule}
	
	The correctness of \Cref{rrule:distinct_path_endpoints} is trivial, and it reduces the number of paths in $\mathcal{P}$ by one.
	Clearly, the exhaustive application of \Cref{rrule:distinct_path_endpoints} to $\mathcal{P}$ yields $m\le \alpha(G)$, as $\{s_1, s_2, \ldots, s_m\}$ induces an independent set in $G$.
	When \Cref{rrule:distinct_path_endpoints} cannot be applied, then the only edges in $G$ allowed between the endpoints of paths in $\mathcal{P}$ are the edges of form $s_it_i$.
	
	\medskip\noindent\textbf{Special and usual paths.}
	We call a path $P_i$ in the path cover \emph{special}, if either $s_i=t_i$, or $P_i$ consists of only edge $s_it_i$, or its endpoints are connected by the edge $s_it_i$.
	If a path $P_i$ is not special, we call it \emph{usual}.
	We claim that sufficiently many usual paths give a solution to the problem.
	
	\begin{claim}\label{claim:many_usuals_is_win}
		If at least $k$ paths are usual in the path cover   $\mathcal{P}=P_1, \ldots, P_m$ and \Cref{rrule:distinct_path_endpoints} is not applicable, then $m\le \alpha(G)-k$. Moreover, an independent set $I$  in $G$ of size  $m+k$ is computable in polynomial time.
	\end{claim}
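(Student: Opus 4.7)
The plan is to construct the independent set $I$ explicitly from the endpoints of the paths in $\mathcal{P}$, using two vertices from every usual path and one vertex from every special path.

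First I would observe that since \Cref{rrule:distinct_path_endpoints} cannot be applied, for every $i \neq j$ there is no edge in $G$ between the set $\{s_i, t_i\}$ and the set $\{s_j, t_j\}$. Hence the only edges in $G[\bigcup_i\{s_i,t_i\}]$ are of the form $s_i t_i$ for a single path $P_i$. Next, by the definition of a usual path, if $P_i$ is usual then $s_i \neq t_i$ and $s_it_i \notin E(G)$, so $\{s_i,t_i\}$ is itself an independent set of size $2$. For each special path $P_i$, the pair $\{s_i,t_i\}$ is either a single vertex or an edge of $G$, so at least one of $s_i,t_i$ can be selected.

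I would then build $I$ by putting both endpoints $s_i,t_i$ into $I$ for every usual $P_i$, and a single endpoint (say $s_i$) for every special $P_i$. By the observations above, $I$ is an independent set: within each path the chosen endpoints are non-adjacent (by the usual/special case distinction), and between distinct paths no edges exist because \Cref{rrule:distinct_path_endpoints} is inapplicable. Let $u$ denote the number of usual paths and $m-u$ the number of special ones. Then
\[
|I| = 2u + (m-u) = m + u \geq m + k,
\]
since $u \geq k$ by assumption. In particular $\alpha(G) \geq m + k$, which rearranges to $m \leq \alpha(G) - k$.

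There is no real obstacle here; the construction is combinatorial and runs in polynomial time since we only inspect the endpoints of the (at most $m \leq |V(G)|$) paths in $\mathcal{P}$, classify each path as usual or special by checking whether $s_i = t_i$ or $s_it_i \in E(G)$, and then output the resulting vertex set. The only subtlety worth stating explicitly in the write-up is that the non-applicability of \Cref{rrule:distinct_path_endpoints} is exactly what kills cross-path edges, so no further care is needed to verify independence of $I$.
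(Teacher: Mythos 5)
Your proof is correct and follows essentially the same approach as the paper's: both take the endpoint set $\{s_1,\ldots,s_m\}$ (independent because Reduction Rule~\ref{rrule:distinct_path_endpoints} is inapplicable) and augment it with $t$-endpoints of usual paths, the only difference being that you add $t_i$ for every usual path (yielding $m+u$ vertices) while the paper adds $t_i$ for only $k$ chosen usual paths (yielding exactly $m+k$). Your bookkeeping is marginally cleaner and your construction is, if anything, a slight strengthening, but it is the same argument.
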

	\begin{claimproof}
		We show that $G$ has an independent set $I$ of size $m+k$.
		Initialize $I$ with $\{s_1, s_2,\ldots, s_m\}$.
		Since \Cref{rrule:distinct_path_endpoints} is not applicable, $I$ is independent in $G$.
		
		Let $i_1, i_2, \ldots, i_k$ be $k$ distinct indices of usual paths.
		Since each $P_{i_j}$ is usual, $t_{i_j}\neq s_{i_j}$ and $s_{i_j}$ is not connected with $t_{i_j}$.
		Also, \Cref{rrule:distinct_path_endpoints} is not applicable, so there is no edge between $t_{i_j}$ and any vertex in $I$, while $\{t_{i_1}, t_{i_2}, \ldots, t_{i_k}\}$ is also an independent set in $G$.
		
		Put $t_{i_1}, t_{i_2}, \ldots, t_{i_k}$ in $I$.
		We obtain an independent set of size $m+k$ in $G$, consequently $\alpha(G)-k\ge m$.
	\end{claimproof}
	
	\Cref{claim:many_usuals_is_win} yields the corresponding solution subroutine for the algorithm.
	
	\begin{trule}\label{trule:too_many_usuals}
		If there are at least $k$ usual paths among $\mathcal{P}=P_1, P_2,\ldots, P_m$, output $\mathcal{P}$ and $I$ as in \Cref{claim:many_usuals_is_win}.
	\end{trule}
	
	From now on,  we assume that $\mathcal{P}$ has less than $k$ usual paths.
	The remaining reduction rules deal primarily with special paths.
	The following rule reduces the total number of paths given that $G$ has an edge between \emph{any} vertex of a special path and an endpoint of an arbitrary other path.

	\begin{rrule}\label{rrule:special_with_usual}
		If $P_i$ is a special path of length at least two and there is an edge in $G$ connecting some $u\in V(P_i)$ with $s_j$ (or $t_j$) for some $j\neq i$, then
		\begin{enumerate}
			\item Add edge $s_it_i$ to $P_i$, so it becomes a cycle;
			\item Remove edge $uv$ from $P_i$, where $v$ is any neighbor of $u$ in $P_i$, so it becomes an $uv$-path;
			\item Join $P_i$ and $P_j$ via $us_j$ (or $ut_j$).
		\end{enumerate}
	\end{rrule}
	
	\Cref{rrule:special_with_usual} replaces $P_i$ and $P_j$ with a path covering $V(P_i)\cup V(P_j)$, so $\mathcal{P}$ remains a path cover implying that the rule is correct.
	The following reduction rule deals with an edge between two vertices of distinct special paths.
	It is demonstrated in \Cref{fig:rrules}a.
	
	\begin{rrule}\label{rrule:merge_special_paths}
		If $P_i$ and $P_j$ are special paths of length at least two for $i\neq j$, and there is an edge in $G$ connecting $u_i \in V(P_i)$ and $u_j \in V(P_j)$, then
		\begin{enumerate}
			\item Add edge $s_jt_j$ to $P_j$, so it becomes a cycle;
			\item Remove edge $u_jv_j$ from $P_j$, so it becomes an $u_j v_j$-path.
			\item Apply \Cref{rrule:special_with_usual} to $P_i$, $P_j$ and $u_i u_j$.
		\end{enumerate}
	\end{rrule}

	Again, \Cref{rrule:merge_special_paths} replaces $P_i$ and $P_j$ with a path covering  $V(P_i)\cup V(P_j)$, thus reducing the number of paths in $\mathcal{P}$.
	We claim the following.
	\begin{claim}\label{claim:no_edges_between_specials}
		If Reduction Rules \ref{rrule:distinct_path_endpoints}, \ref{rrule:special_with_usual}, and \ref{rrule:merge_special_paths} are not applicable, then for each pair of distinct special paths $P_i, P_j$ in $\mathcal{P}$, there is no edge between $V(P_i)$ and $V(P_j)$ in $G$.
	\end{claim}
	\begin{claimproof}
		Assume that the reduction rules are not applicable, but there are special paths $P_i, P_j \in \mathcal{P}$ with $i\neq j$ with an edge between $u_i \in V(P_i)$ and $u_j \in V(P_j)$.
		If both $P_i$ and $P_j$ consist of more than two vertices, then \Cref{rrule:merge_special_paths} is applicable.
		Thus, without loss of generality, $P_j$ consists of one or two vertices, so $V(P_j)=\{s_j, t_j\}$ and $u_j\in\{s_j,t_j\}$.
		If $P_i$ consists of more than two vertices, then \Cref{rrule:special_with_usual} is applicable, which is a contradiction.
		Then, the length of $P_i$ is less than three, implying that $u_i\in\{s_i,t_i\}$.
		In this case, \Cref{rrule:distinct_path_endpoints} is applicable, leading us to the final contradiction. 
	\end{claimproof}
	
	The claim is the basis for the next solution subroutine, which the algorithm applies next.
	We explain how it works by proving the following claim.
	
	\begin{claim}\label{claim:too_many_non_cliques}
		If \Cref{trule:too_many_usuals} and Reduction Rules~\ref{rrule:distinct_path_endpoints}, \ref{rrule:special_with_usual}, and \ref{rrule:merge_special_paths} are not applicable, and there are at least $2k$ special paths $P_i$ such that $V(P_i)$ is not a clique in $G$, then $m< \alpha(G)-k$.  Moreover, an independent set $I$  in $G$ of size more than $m+k$ is computable in polynomial time.

	\end{claim}
	\begin{claimproof}
		We construct an independent set $I$ consisting of more than $m+k$ vertices.
		Since \Cref{trule:too_many_usuals} is not applicable, there are more than $m-k$ special paths.
		For each special path $P_i$ such that $V(P_i)$ is not a clique, put two independent vertices of $V(P_i)$ in $I$.
		For each other special path, put its arbitrary vertex into $I$.
		
		By \Cref{claim:no_edges_between_specials}, $I$ is an independent set in $G$.
		Since there are at least $2k$ non-cliques among vertex sets of special paths, $|I|>(m-k)+2k=m+k$.
	\end{claimproof}
	
	The solution subroutine based on \Cref{claim:too_many_non_cliques} is given below.
	The algorithm applies it when none of the reduction rules and solution subroutines above are applicable.
	
	\begin{trule}\label{trule:too_many_non_cliques}
	If there are at least $2k$ special paths $P_i$ such that $V(P_i)$ is not a clique in $G$, then output $\mathcal{P}$ and an $I$ as in   \Cref{claim:too_many_non_cliques}.
	\end{trule}

	The algorithm then exploits the edges between special and usual paths in $\mathcal{P}$.
	The following reduction rule reduces the number of special paths in $\mathcal{P}$ while not increasing the total number of paths in $\mathcal{P}$.

\begin{figure}
	
	\begin{tabular}{c|c} 
		a 
		\begin{tikzpicture}[scale=0.7]
			
			\node[circle,draw,radius=0.25] (si) at (0,0) {};
			\node[above=0.05cm of si] {$s_i$};
			
			\node[circle,draw,radius=0.25] (ti) at (8,0) {};
			\node[above=0.05cm of ti] {$t_i$};
			
			\node[circle,draw,radius=0.25] (sj) at (0,-2) {};
			\node[below=0.05cm of sj] {$s_j$};
			
			\node[circle,draw,radius=0.25] (tj) at (8,-2) {}; 
			\node[below=0.05cm of tj] {$t_j$};
			
			\node[circle,draw,radius=0.25] (ui) at (4,0) {};
			\node[above=0.05cm of ui] {$u_i$};
			
			\node[circle,draw,radius=0.25] (uj) at (4,-2) {};
			\node[below=0.05cm of uj] {$u_j$};
			
			\node[circle,draw,radius=0.25] (vi) at (2.75,0) {};
			\node[above=0.05cm of vi] {$v_i$};
			
			\node[circle,draw,radius=0.25] (vj) at (5.25,-2) {};
			\node[below=0.05cm of vj] {$v_j$};
			
			\draw[very thick] (si) to[bend left] (ti);
			
			\draw[very thick, decorate,decoration={snake}] (si) -- node[above] {} (vi);  
			
			\draw (vi) -- node[above] {} (ui);
			
			\draw[very thick, decorate,decoration={snake}] (ui) -- node[above] {} (ti);
			
			\draw[very thick] (sj) to[bend right] (tj);
			
			\draw[very thick, decorate,decoration={snake}] (sj) -- node[below] {} (uj);
			
			\draw (uj) -- node[below] {} (vj); 
			
			\draw[very thick, decorate,decoration={snake}] (vj) -- node[below] {} (tj);
			
			\draw[very thick] (ui) -- (uj);
			
		\end{tikzpicture}
		
		&
	b 
		\begin{tikzpicture}[scale=0.7]
			
			\node[circle,draw,radius=0.25] (si) at (0,0) {};
			\node[above=0.05cm of si] {$s_i$};
			
			\node[circle,draw,radius=0.25] (ti) at (8,0) {};
			\node[above=0.05cm of ti] {$t_i$};
			
			\node[circle,draw,radius=0.25] (sj) at (0,-4) {};
			\node[below=0.05cm of sj] {$s_j$};
			
			\node[circle,draw,radius=0.25] (tj) at (8,-4) {};
			\node[below=0.05cm of tj] {$t_j$}; 
			
			\node[circle,draw,radius=0.25] (ui) at (4,0) {};
			\node[above=0.05cm of ui] {$u_i$};
			
			\node[circle,draw,radius=0.25] (uj) at (4,-4) {};
			\node[below=0.05cm of uj] {$u_j$};
			
			\node[circle,draw,radius=0.25] (vi) at (2.75,0) {}; 
			\node[above=0.05cm of vi] {$v_i$};
			
			\node[circle,draw,radius=0.25] (vj) at (5.25,-4) {};
			\node[below=0.05cm of vj] {$v_j$};
			
			\draw[red,very thick] (si) to[bend left] (ti);  
			
			\draw[red,very thick, decorate,decoration={snake}] (si) -- node[above] {} (vi);
			
			\draw (vi) -- node[above] {} (ui); 
			
			\draw[red,very thick, decorate,decoration={snake}] (ui) -- node[above] {} (ti);
			
			\draw[blue,very thick] (sj) to[bend right] (tj);
			
			\draw[blue,very thick, decorate,decoration={snake}] (sj) -- node[below] {} (uj);
			
			\draw (uj) -- node[below] {} (vj);
			
			\draw[blue,very thick, decorate,decoration={snake}] (vj) -- node[below] {} (tj);
			
			\node[circle,draw,radius=0.25] (sl) at (-2,-2) {};
			\node[below=0.05cm of sl] {$s_\ell$};
			
			\node[circle,draw,radius=0.25] (tl) at (10,-2) {};
			\node[below=0.05cm of tl] {$t_\ell$}; 
			
			\node[circle,draw,radius=0.25] (xi) at (3,-2) {};
			\node[below=0.05cm of xi] {$x_i$};
			
			\node[circle,draw,radius=0.25] (xj) at (7,-2) {};
			\node[below=0.05cm of xj] {$x_j$};
			
			\node[circle,draw,radius=0.25] (ni) at (4,-2) {};
			
			\node[circle,draw,radius=0.25] (nj) at (6,-2) {};
			
			\draw[red,very thick] (ui)--(xi); 
			
			\draw[blue,very thick] (uj)--(xj);
			
			\draw[red,very thick,decorate,decoration={snake}] (sl) -- node[above] {} (xi);
			
			\draw[very thick,decorate,decoration={snake}] (ni) -- node[above] {} (nj);
			
			\draw (xi) --(ni); 
			
			\draw (xj)--(nj);
			
			\draw[blue,very thick,decorate,decoration={snake}] (xj) -- node[above] {} (tl);  
			
			\draw[dashed] (sl) --(vi);
			
			\draw[dashed] (tl) --(vj); 
			
			\draw[dashed] (sl) to[bend left=12] (tl);
			
		\end{tikzpicture}
		
	\end{tabular}
	\caption{Illustration of transformations performed in (a) \Cref{rrule:merge_special_paths} and (b) \Cref{rrule:two_specials_to_one}.\label{fig:rrules}
	Paths before the transformations are horizontal.
	Paths after the transformations are highlighted in bold and have pairwise-distinct colors.
	Dashed lines represent absence of an edge.
	Snake lines represent paths instead of single edges.}
\end{figure}

	\begin{rrule}\label{rrule:two_specials_to_one}
		If $P_i$ and $P_j$ are special paths for $i\neq j$, and there is a usual path $P_\ell$, and there are two edges $u_ix_i, u_jx_j$ in $G$ with $u_i\in V(P_i), u_j\in V(P_j), x_i,x_j\in V(P_\ell)$, $x_i\neq x_j$, and $x_i$ goes before $x_j$ in $P_\ell$, then
		\begin{enumerate}
			\item Make $u_i$ endpoint of $P_i$ and $u_j$ endpoint of $P_j$ similarly to the first two steps of \Cref{rrule:special_with_usual};
			\item Break $P_\ell$ into three (or two) parts: the prefix part $s_\ell P_\ell x_i$, the suffix part $x_j P_\ell t_\ell$, and (if exists) the remaining middle part;
			\item Join $P_i$ with the prefix part via $u_ix_i$;
			\item Join $P_j$ with the suffix part via $u_jx_j$.			
		\end{enumerate}
	\end{rrule}

	\Cref{fig:rrules}b illustrates the result of application of \Cref{rrule:two_specials_to_one}.
	The following claim addresses its correctness.

	\begin{claim}
		If Reduction Rules \ref{rrule:distinct_path_endpoints}, \ref{rrule:special_with_usual} are not applicable, then \Cref{rrule:two_specials_to_one} reduces the number of special paths in $\mathcal{P}$ and does not increase the number of paths in $\mathcal{P}$.
	\end{claim}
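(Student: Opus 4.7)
The plan is to establish both halves of the claim by a case analysis on the length of $P_i$ (and symmetrically on $P_j$), showing that the two newly formed paths $P'_i$ (obtained by joining $P_i$ with the prefix) and $P'_j$ (obtained by joining $P_j$ with the suffix) are necessarily usual. After step~1 of \Cref{rrule:two_specials_to_one}, the rearranged $P_i$ has endpoints $u_i$ and some vertex $v_i$, where: $v_i = u_i$ if $P_i$ has length zero, $v_i$ is the other endpoint of $P_i$ if $P_i$ has length one, and $v_i$ is a neighbor of $u_i$ in the cycle $P_i+\{s_it_i\}$ if $P_i$ has length at least two. Joining with the prefix of $P_\ell$ via the edge $u_i x_i$ then yields $P'_i$ with endpoints $v_i$ and $s_\ell$; symmetrically $P'_j$ has endpoints $v_j$ and $t_\ell$.

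The key step is to show that $v_i s_\ell \notin E(G)$, which combined with $v_i \neq s_\ell$ (immediate from $V(P_i)\cap V(P_\ell)=\emptyset$) and $|V(P'_i)|\ge 2$ means that $P'_i$ is usual. In the first two length cases, $v_i$ is an endpoint of the original $P_i$, so the inapplicability of \Cref{rrule:distinct_path_endpoints} directly yields $v_i s_\ell \notin E(G)$. In the third case $v_i$ is an arbitrary vertex of a special path of length at least two, so the inapplicability of \Cref{rrule:special_with_usual} yields $v_i s_\ell \notin E(G)$. An identical argument handles $P'_j$ (using the endpoint $t_\ell$ of $P_\ell$ in place of $s_\ell$).

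With $P'_i$ and $P'_j$ both usual, the counting is routine. Removing $P_i$, $P_j$, $P_\ell$ destroys two special paths and one usual path; inserting back $P'_i$, $P'_j$, and possibly a single middle subpath $P_m$ of $P_\ell$ (present exactly when $x_j$ is not the immediate successor of $x_i$ on $P_\ell$) contributes two usual paths and at most one possibly special path. Hence the total number of paths changes by $0$ or $-1$, while the number of special paths among the affected triple drops from $2$ to at most $1$, which is a strict decrease. The only real subtlety is the length-at-least-two case for $P_i$, where we crucially use that the inapplicability of \Cref{rrule:special_with_usual} forbids edges from \emph{every} vertex of $V(P_i)$ to every endpoint of every other path, not merely from the endpoints of $P_i$ itself; without this strengthening, the internal vertex $v_i$ could a priori be adjacent to $s_\ell$ and $P'_i$ would fail to be usual.
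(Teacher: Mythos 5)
Your proof is correct and follows the paper's argument: both establish that the two newly created paths (prefix-joined $v_i s_\ell$-path and suffix-joined $v_j t_\ell$-path) are usual, then note the counting is immediate since three paths are replaced by at most three, two of which are usual. Your explicit case split on the length of $P_i$ is in fact slightly cleaner than the paper's — the paper invokes only \Cref{rrule:special_with_usual} to exclude $v_i s_\ell \in E(G)$, which literally applies only when $P_i$ has length at least two, whereas for $P_i$ of length zero or one the vertex $v_i$ is an endpoint and one must appeal to \Cref{rrule:distinct_path_endpoints} instead, exactly as you do.
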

	\begin{claimproof}
		The second part of the claim is trivial, since \Cref{rrule:two_specials_to_one} produces three (or two, if $x_ix_j \in E(P_\ell)$) paths by transforming three distinct paths $P_i, P_j, P_\ell$, so $|\mathcal{P}|$ cannot increase.
		
		To prove the first part, we show that the resulting $v_i s_\ell$-path and $v_j t_\ell$-path are usual.
		We denote these two paths by $Q_i$ and $Q_j$ correspondingly.
		We give the proof only for $Q_i$, since the proof for $Q_j$ is symmetric.
		
		We first show that $|V(Q_i)|\ge 3$.
		Since $\{u_i, v_i, x_i, s_\ell\} \subseteq V(Q_i)$, it is enough to show that there are at least three distinct vertices among $u_i, v_i, x_i, s_\ell$.
		The only two pairs of vertices that can coincide are $u_i, v_i$ and $x_i,s_\ell$, since they belong to disjoint paths $P_i$ and $P_\ell$ correspondingly.
		Hence, if $|V(Q_i)|<3$, then $u_i=v_i$ and $x_i=s_\ell$ both hold.
		But $u_i=v_i$ implies that $P_i$ has length $0$ and $s_i=t_i=u_i$.
		Then $u_ix_i=s_is_\ell$, so $s_is_\ell \in E(G)$, and \Cref{rrule:distinct_path_endpoints} is applicable for $P_i$ and $P_\ell$.
		This contradiction proves $|V(Q_i)|\ge 3$.
		
		It is left to show that there is no edge between the endpoints of $Q_i$, $v_i$ and $s_\ell$.
		Indeed, if $v_is_\ell \in E(G)$, then we have an edge between a special path $P_i$ and an endpoint of $P_\ell$ implying that \Cref{rrule:special_with_usual} is applicable.
		This contradiction proves the claim.
	\end{claimproof}

	The list of reduction rules of the algorithm is exhausted.
	Starting from this point, the algorithm exploits the structure of the graph, assuming that no reduction rule is applicable.
	In what follows, we explain these structural properties. We also give the last solution subroutine of the algorithm.
	
	\begin{claim}\label{claim:special_separator}
		If none of the Reduction Rules \ref{rrule:distinct_path_endpoints}-\ref{rrule:two_specials_to_one} and Solution Subroutines 1,2 can be applied to $\mathcal{P}$, then there exists $S\subseteq V(G)$ with $|S|<k$ such that $V(P_i)$ is a connected component in $G-S$ for at least $m-2k$ special paths $P_i$. Moreover, such a set $S$ could be constructed in polynomial time.
	\end{claim}
	\begin{claimproof}
		Since \Cref{claim:many_usuals_is_win} is not applicable, there are less than $k$ usual paths in $\mathcal{P}$.
		We construct $S$ by taking one or zero vertices from each usual path, so $|S|<k$ holds automatically.
		We now explain the choice of the vertex to put in $S$ for each usual path.
		
		Consider a usual path $P_i\in \mathcal{P}$.
		We call a vertex $u\in V(P_i)$ a \emph{connector} if there is a special path $P_j$ and a vertex $v \in V(P_j)$ with $uv \in E(G)$.
		If $P_i$ has exactly one connector, then we put this connector in $S$.		
		Otherwise, i.e.\ when $P_i$ has zero or more than one connectors, we do not put any vertex of $P_i$ in $S$.
		
		We show that at least $m-2k$ special paths are isolated in $G-S$.
		First note that there are at least $m-k$ special paths in $\mathcal{P}$, since the number of usual paths is bounded.
		
		Assume now that $P_j$ is a special path that is not isolated in $G-S$.
		Then there is an edge $uv$ with $u \in V(P_i)$, $v \in V(P_j)$, $i\neq j$ and $u\notin S$.
		By \Cref{claim:no_edges_between_specials}, $P_i$ can only be usual.
		Since $u$ is a connector of $P_i$ and $u\notin S$, $P_i$ has two or more distinct connectors.
		As \Cref{rrule:two_specials_to_one} is not applicable, all these connectors can only connect $P_i$ with $P_j$.
		Hence, each usual path $P_i$ gives at most one non-isolated special path in $G-S$.
		Thus, at most $k$ special paths in $G-S$ are not isolated, implying that at least $(m-k)-k=m-2k$ special paths are isolated in $G-S$.
	\end{claimproof}

	We are ready to present the final solution subroutine which combines all algorithmic results given previously in this section.

	\begin{trule}\label{trule:final} Perform the following steps:
		\begin{enumerate}
			\item Obtain $S$ according to \Cref{claim:special_separator} and put $h:=\max\{0, m-4k\}$;
			\item Order paths in $\mathcal{P}$ so for each $i \in [h]$,  the vertices  $V(P_i)$ induce a complete connected component in $G-S$;
			\item Apply the algorithm of \Cref{lemma:selected_cliques} to $G$, $S$ and $V(P_1), V(P_2), \ldots, V(P_{h})$ and obtain a set $X$ of at most $2k^2$ indices, where each $i \in X$ satisfies $i \in [h]$;
			\item Obtain $G'$ and $\mathcal{P}'$ by removing $V(P_j)$ from $G$ and $P_j$ from $\mathcal{P}$, for each $j$ such that $j\in [h]$ and $j\notin X$;
			\item Apply the algorithm of \Cref{lemma:solve_given_small_cover} as a subroutine to $G',\mathcal{P}'$ and $k':=2k$ and obtain a path cover $\mathcal{S}'$ of $G'$ and (possibly) an independent set $I'$ in $G'$;
			\item Obtain a path cover $\mathcal{S}$ of $G$ by adding all paths in $\mathcal{P}\setminus\mathcal{P}'$ to $\mathcal{S}'$;
			\item If $I'$ was obtained, remove all vertices of $S$ from $I'$ and for each path $P_j\in\mathcal{P}\setminus\mathcal{P}'$, add one vertex of $P_j$ to $I'$. Denote the resulting set by $I$;
			\item Output $\mathcal{S}$ and (if obtained) $I$.
		\end{enumerate}
	\end{trule}

	We conclude the proof with a claim certifying the correctness of the solution subroutine.
	In contrast with all previous rules and subroutines, \Cref{trule:final} requires superpolynomial (but \classFPT in $k$) running time.
	
	\begin{claim}
		If Reduction Rules \ref{rrule:distinct_path_endpoints}-\ref{rrule:two_specials_to_one} and Solution Subroutines 1,2 are not applicable, then
		\Cref{trule:final} works in $2^{k^{\Oh(k^4)}}\cdot |G|^{\Oh(1)}$ time and produces a feasible solution.
	\end{claim}
	\begin{claimproof}
		First note that at least $m-4k$ complete connected components are guaranteed in $G-S$ by \Cref{claim:special_separator} ($S$ separates $m-2k$ special paths) and inapplicability of \Cref{trule:too_many_non_cliques} (at most $2k$ vertex sets of special paths are not cliques).

	We now move on to establish the upper bound on the running time.
All steps, except for applying \Cref{lemma:solve_given_small_cover}, are carried out in polynomial time. We have
\[
|\mathcal{P}'|=|\mathcal{P}|-h+|X|=m-h+|X|\le 4k+2k^2=\Oh(k^2).\]

		By applying  \Cref{lemma:solve_given_small_cover}  to $G',k'$ and $\mathcal{P}'$, we  obtain the running time  $2^{k^{\Oh(k^4)}}\cdot |G|^{\Oh(1)}$.
		
	Set $\mathcal{S}$ is obtained from a path cover $\mathcal{S'}$ of $G'$ via adding paths covering $V(G)\setminus V(G')$. Hence it is also a path cover. 
		On the other hand, $I$ is (if obtained) an independent set in $G$ because $I\subseteq V(G-S)$ and $I$ is independent in $G-S$.
		
		Now we prove that either $\mathcal{S}$ is optimal or $|I|-|\mathcal{S}|\ge k$.
		First, let us consider the scenario where $I$ is not produced. This can occur only when 
		 the algorithm of \Cref{lemma:solve_given_small_cover} does not produce $I'$.
		Then \Cref{lemma:solve_given_small_cover} guarantees that $\mathcal{S'}$ is an optimal path cover of $G'$ and $|\mathcal{S}'|=\operatorname{pc}(G')$.
		By \Cref{lemma:selected_cliques}, removal of $V(P_j)$ for any $j\in [h]\setminus X$ decreases $\operatorname{pc}(G)$ by one, so $\operatorname{pc}(G')=\operatorname{pc}(G)-(h-|X|).$
		Since $|\mathcal{P}\setminus \mathcal{P}'|=h-|X|$, we have $|\mathcal{S}|=\operatorname{pc}(G)$. It means that $\mathcal{S}$ is an optimal path cover of $G$.

Finally, let us consider the situation where \(I\) is produced. In this case, the algorithm in \Cref{lemma:solve_given_small_cover} provides 
\(\mathcal{S'}\) and \(I'\) with \(|I'| - |\mathcal{S}'| = k' = 2k\). We notice that

\[ |I|\ge |I'|-|S|+|\mathcal{P}\setminus \mathcal{P}'|> |I'|-k+(h-|X|).\]

		As discussed in the previous paragraph, $|\mathcal{S}|-|\mathcal{S}'|=h-|X|$.
		We conclude that  \[|I|-|\mathcal{S}| =2k+ (|I|-|I'|)-(|\mathcal{S}|-|\mathcal{S}'|)>2k+(-k+h-|X|)-(h-|X|)=k.\]
		The feasibility of $\mathcal{S}$ and $I$ is proved.
	\end{claimproof}

This completes the proof of  \Cref{thm:above_GM}.

\section{Proof of  \Cref{thm:max_model}: List TM-embedding parameterized by independence number}

In this section, we prove \Cref{thm:max_model}. The proof consists of three main steps, deferred to their respective subsections. The first step is the proof of the stronger version of the theorem for highly connected graphs, which we call Spanning Lemma. It shows that for any fixed injection $f: V(H) \to V(G)$, the desired TM-embedding respecting $f$ or an independent set can be found; moreover, the found TM-embedding spans $V(G)$. The second step is the so-called Merging Lemma, which combines the solutions from several highly-connected components, provided that the size of the separator between them is small. Finally, to obtain the proof of \Cref{thm:max_model} itself, we show an iterative procedure that either finds the desired separator or an independent set.

\subsection{Spanning Lemma}

We first deal with highly-connected graphs. More precisely, we demonstrate that, given an injective mapping $f: V(H) \to V(G)$, we can either compute a TM-embedding of $H$ in $G$ respecting $f$ and spanning $V(G)$, or find an independent set of size $k$ in $G$; this holds if the connectivity of $G$ is sufficiently high.
 Recall that a graph $G$ is \emph{$\ell$-linked} if it has at least $2\ell$ vertices and for every set of $\ell$ disjoint pairs $(s_j, t_j)$ of distinct vertices, there exist $\ell$ vertex-disjoint paths in $G$ such that the endpoints of the $j$-path are $s_j$ and $t_j$, for each $j \in [\ell]$. A graph is called \emph{Hamiltonian-$\ell$-linked} if, in addition, the set of $\ell$ paths covers all vertices of $G$.

We need a procedure constructing the linkage for computing a TM-embedding of $H$.  For that we need a constructive version of the following theorem  of Thomas and Wollan~\cite{Thomas2005} whose original proof is non-constructive.

\begin{proposition}[Corollary~1.3, \cite{Thomas2005}]\label{prop:k-linked}
	If $G$ is $10\ell$-connected then $G$ is $\ell$-linked.
\end{proposition}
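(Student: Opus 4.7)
The plan is to reduce the $\ell$-linkage task to finding a sufficiently rich rooted substructure, specifically a clique minor whose branch sets contain the prescribed terminals, and then to route the desired paths through it. I would proceed in two stages: first, show that $c\ell$-connectivity (for an appropriate constant $c$) forces a rooted $K_{3\ell}$-minor in $G$ in which each of the $2\ell$ terminals lies in a distinct branch set; second, use such a rooted clique minor to build $\ell$ vertex-disjoint paths with the prescribed endpoints.

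The second stage is the easier one. Once a $K_{3\ell}$-minor is given with each terminal $s_j, t_j$ sitting in its own branch set, contract each branch set to a super-vertex; the resulting graph is complete on $3\ell$ super-vertices, $2\ell$ of which are ``terminal'' super-vertices, and $\ell$ of which are spare. Pair up the super-vertices containing $s_j$ and $t_j$ and route the $j$-th path through a fresh spare super-vertex, using a different spare for each $j$. Expanding the contractions back out along the internal connecting paths within the branch sets then produces the required vertex-disjoint linkage. This step only needs the classical fact that a branch set is connected, together with Menger's theorem to connect each terminal to the rest of its branch set.

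The first stage --- passing from high connectivity to a rooted clique minor --- is the heart of the matter and the main obstacle. A relatively accessible route gives only $O(\ell^2)$-connectivity: combine the extremal results of Mader and of Bollob\'as--Thomason (minimum degree $\Omega(t\sqrt{\log t})$ forces a $K_t$-minor) with a standard rooting argument that peels off terminals one at a time, using Menger's theorem to absorb each terminal into its own branch set while maintaining the residual connectivity needed for the recursion. To reach the sharp linear bound $10\ell$, one must use the intricate machinery of Thomas and Wollan: they introduce the notion of a \emph{society}, a pair $(G,\Omega)$ where $\Omega$ is a cyclic order on a distinguished subset of $V(G)$, and prove a structural theorem saying that every society is either \emph{rural} (admits a small separator respecting $\Omega$, enabling an inductive reduction) or \emph{cosmopolitan}, in which case the required linkage is obtained directly from a large rooted minor. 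Setting $\Omega=(s_1,t_1,\ldots,s_\ell,t_\ell)$ reduces the proposition to this society statement.

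The hard step is clearly the cosmopolitan case of the society theorem, which is the core technical content of~\cite{Thomas2005}. Since the proposition is used here only as a black box, and since its original proof is involved and, as the authors note, not readily constructive, I would either cite Thomas--Wollan directly for the sharp bound (as the paper does) or, if an elementary and self-contained treatment were preferable, settle for the weaker $O(\ell^2)$-connectivity version obtained from Mader/Bollob\'as--Thomason plus rooting via Menger, which still suffices to drive the Ramsey-based argument used elsewhere in this paper.
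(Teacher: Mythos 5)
The paper offers no proof of this proposition; it is cited verbatim as Corollary~1.3 of Thomas and Wollan, and used in \Cref{lemma:disjoint_paths_or_is} only as an existence guarantee (to certify that the brute-force enumeration in the small-graph case will in fact find the linkage). Your proposal correctly recognizes this and, rather than attempting to reproduce the Thomas--Wollan argument, sketches its high-level shape (rooted clique minor forced by high connectivity, then routing through spare branch sets) and recommends citing it as a black box, which is exactly what the paper does.

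One remark on your fallback suggestion. Replacing the linear bound by an $O(\ell^2)$ bound obtained from Mader/Bollob\'as--Thomason plus Menger-based rooting is indeed a legitimate and self-contained alternative, but it is not quite a drop-in substitute: the constant $10\ell$ propagates into the connectivity hypotheses of \Cref{lemma:disjoint_paths_or_is}, the Spanning Lemma (\Cref{lemma:highly-connected}), and the Merging Lemma (\Cref{lemma:model_sep}), and from there into the size of the iteratively built separator $S$ in the proof of \Cref{thm:max_model}. Swapping in $\Theta(\ell^2)$ would force all of those thresholds to be raised to quadratic in the relevant parameters, and the final running-time exponent would degrade accordingly. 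The algorithm would remain \classFPT, but the stated bounds would need to be reworked; so the linear Thomas--Wollan bound is not strictly necessary for FPT-ness, but it is what makes the quoted running times hold. With that caveat, your proposal is sound and aligned with the paper's treatment.

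Also, a small technical note on your second stage: once you have a rooted $K_{3\ell}$-minor with each terminal in its own branch set, you do not actually need a fresh spare branch set per pair. Since branch sets are connected and the minor is complete, you can route directly along the edge between the branch set of $s_j$ and that of $t_j$; the spare branch sets only become relevant if the terminals are not already placed in distinct branch sets and one needs extra room to separate them. This is a minor simplification and does not affect the correctness of the outline.
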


We do not know how to make the proof of \Cref{prop:k-linked} constructive in general. However, since the graphs of our interest have bounded independence number, we can use this property to construct linkages in highly connected graphs.
The algorithm in the following lemma serves as a key starting subroutine in the proof of our result for highly-connected instances.

\begin{lemma}\label{lemma:disjoint_paths_or_is}
	There is an algorithm that, given integers $k,\ell\geq 1$, a $10\ell$-connected graph $G$, and a family of vertex pairs $(s_1, t_1), \ldots, (s_\ell, t_\ell)$ such that $s_i\neq t_i$ for each $i\in [\ell]$, in time $2^{(k+\ell)^{\Oh(k)}}+ |G|^{\Oh(1)}$ outputs either an independent set of size $k$ or a family  of $\ell$ internally  vertex-disjoint $s_it_i$-paths $P_i$ for $i\in[\ell]$, that is, 
	the $i$-th path connects $s_i$ and $t_i$ and does not contain any other $s_j$ or $t_j$ as an internal vertex.
\end{lemma}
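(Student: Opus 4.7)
The plan is to split into two regimes based on the size of $G$: a Ramsey-plus-Menger construction when $|V(G)|$ is large, and a direct dynamic programming fallback when it is small. Setting $r := 12\ell$, the splitting threshold is $N := R(k,r)$, which by the standard bound $R(k,r) \leq \binom{k+r-2}{k-1}$ is at most $(k+\ell)^{\Oh(k)}$.

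In the large regime $|V(G)| \geq N$, I first run the standard polynomial-time recursive Ramsey procedure (at each step picking a vertex and recursing on its neighbors or its non-neighbors, using $R(k,r)\leq R(k-1,r)+R(k,r-1)$) to obtain either an independent set of size $k$, which is returned immediately, or a clique $C$ of size $12\ell$. Letting $T := \{s_1,t_1,\dots,s_\ell,t_\ell\}$ (so $|T|\leq 2\ell$) and $K := C\setminus T$, I get a clique of size at least $10\ell$ disjoint from the terminals. I then construct the linkage by a Menger-type argument: build an auxiliary graph $G^*$ from $G-T$ by attaching, for each of the $2\ell$ endpoint-slots at a terminal $v$, one fresh ``slot vertex'' adjacent to $N_G(v)\setminus T$, and find $2\ell$ pairwise vertex-disjoint paths from these slot vertices to vertices of $K$ via a vertex-capacitated max-flow computation. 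The key point is that $G-T$ is $(10\ell-|T|)\geq 8\ell$-connected and each slot vertex has at least $8\ell$ neighbors there, so no vertex set of size less than $2\ell$ can separate the slot vertices from $K$; hence Menger's theorem yields the paths. Translating slots back to their terminals and joining the two clique endpoints of each pair through the clique edge then gives the paths $P_1,\dots,P_\ell$.

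In the small regime $|V(G)|<N\leq (k+\ell)^{\Oh(k)}$, \Cref{prop:k-linked} (together with a trivial copying gadget at shared endpoints) guarantees that the sought paths exist, so it suffices to find them by exhaustive search. A standard subset dynamic programming with state $(U,u_1,\dots,u_\ell)$ tracking the used vertex set and the current endpoints of the partial paths runs in time $2^{|V(G)|}\cdot|V(G)|^{\Oh(\ell)} = 2^{(k+\ell)^{\Oh(k)}}$, which fits into the additive $2^{(k+\ell)^{\Oh(k)}}$ budget; the large-regime work is polynomial in $|G|$. I expect the main obstacle to be the Menger step in the large regime, specifically the three simultaneous requirements of (i) forbidding terminals as internal vertices of paths, (ii) accommodating coinciding endpoints across pairs, and (iii) retaining enough connectivity to apply Menger's theorem. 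The slot-vertex gadget handles (i) and (ii) in one construction, and (iii) follows because deleting the at-most-$2\ell$ terminals drops connectivity from $10\ell$ only to $8\ell$, still well above $2\ell$.
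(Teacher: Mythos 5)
Your proposal is correct and follows essentially the same strategy as the paper: use a polynomial-time Ramsey subroutine to produce either an independent set of size $k$ or a clique, route $2\ell$ vertex-disjoint paths from the terminals to the clique via Menger/max-flow and concatenate through clique edges, and fall back to brute force when $|V(G)|$ is below the Ramsey threshold. The paper's version is marginally leaner: it aims for a clique of size only $2\ell$ (allowing it to overlap $T$, with trivial zero-length legs absorbing the overlap) and handles repeated terminals by a true-twin transformation up front, while you take a clique of size $12\ell$, excise $T$, and attach slot vertices; both correctly avoid terminals as internal vertices, and both brute-force methods (edge-subset enumeration in the paper, your subset DP) fit the $2^{(k+\ell)^{\Oh(k)}}$ budget.
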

\begin{proof}
Let $T=\bigcup_{i=1}^\ell\{s_i,t_i\}$ be the set of vertices in the input pairs.
We call the vertices in $T$ \emph{terminals}.	
For simplicity, our algorithm first ensures that $s_1,t_1,s_2,t_2,\ldots,s_\ell,t_\ell$ are $2\ell$ distinct vertices, i.e., $|T|=2\ell$.
For each vertex $v \in T$ that appears in $p>1$ input pairs, the algorithm replaces $v$ by $p$ true twins, i.e., by vertices with the same neighborhood as $v$ that are also adjacent to each other.
Then, each repeating occurrence of $v$ in the input pairs is replaced by a distinct copy.
This transformation results in $2\ell$ distinct terminal vertices.
The requirement on $10\ell$-connectivity of the graph still holds, and the modification does not increase the maximum size of an independent set.
Furthermore, given an independent set in the transformed graph, replacing each copy of a terminal vertex with the corresponding original vertex of $G$ gives an independent set in the original graph.
The same is true for paths avoiding terminal vertices: a path between copies is equivalent to the path between original vertices.
Therefore, the transformed input is equivalent to the original input, and the respective solutions can be transformed in polynomial time.
Henceforth, the algorithm works with the transformed input, meaning that all vertices in the input pairs are distinct.

Observe that by \Cref{prop:k-linked}, a family of vertex-disjoint $s_it_i$-paths always exists. Hence, our task is to construct such paths. 	
First, we show that if $G$ contains a clique of size $2\ell$, then the paths can be constructed in polynomial time.
	
	\begin{claim}\label{claim:disjoint_paths_by_clique}
		Let $C$ be a given clique of size $2\ell$ in $G$.
		Then a family of 
		vertex-disjoint paths $P_1, P_2, \ldots, P_\ell$, where for each $i\in[\ell]$, $P_i$ connects $s_i$ with $t_i$, can be constructed in polynomial time.
	\end{claim}
	\begin{claimproof}
		Since $G$ is $2\ell$-connected, by Menger's theorem \cite{Menger1927,Diestel}, there are $2\ell$ internally vertex-disjoint paths connecting vertices of $T=\bigcup_{i=1}^\ell\{s_i,t_i\}$ with distinct vertices of $C$.
		These paths can be constructed in the standard fashion via a network flow algorithm; in particular, using the Ford-Fulkerson algorithm \cite{Ford1956-ym} to find $\ell$-flow in $\Oh(\ell \cdot |G|)$ running time.
		Then for each $i\in [\ell]$, we have two paths, one connecting $s_i$ with $c_{2i-1}$ and the other connecting $t_i$ with $c_{2i}$ for some $c_{2i-1}, c_{2i}\in C$.
		Note that it is possible that one or two of these paths are trivial, i.e., $s_i=c_{2i-1}$ or $t_i=c_{2i}$ can be true.
		
		As $C$ is a clique in $G$ and $c_{2i-1}c_{2i}\in E(G)$, these paths can be concatenated, resulting in the $s_it_i$-path $P_i$.
		Since $c_1, c_2, \ldots, c_{2\ell}$ are pairwise distinct, any two paths among $P_1, P_2, \ldots, P_\ell$ do not share any common vertices.
		We obtain $\ell$ paths as required.
	\end{claimproof}

When $G$ is of bounded size, we employ brute-force enumeration. 	
	
	\begin{claim}\label{claim:disjoint_paths_bruteforce}
		If $G$ has less than $(k+2\ell)^k$ vertices,  then  paths $P_1, P_2, \ldots, P_\ell$ can be found in $\Oh(2^{(k+2\ell)^{2k}})$  time.
	\end{claim}
	\begin{claimproof}
		It is sufficient to enumerate all possible subsets of edges of $G$ contained in the desired family of paths.
		When a subset $S$ of edges is fixed by the algorithm, it is sufficient to check that for each $i\in [\ell]$, there is a connected component in the graph with the vertex set $V(G)$ and the edge set $S$, that is a path between $s_i$ and $t_i$.
		If $S$ satisfies this for each $i\in[\ell]$, the algorithm stops and reports the corresponding components as $P_1, P_2,\ldots, P_\ell$.
	\end{claimproof}

	It is left to explain how the algorithm constructs the clique $C$, given that $G$ has at least $(k+2\ell)^k$ vertices.
	The essential tool here is the classical Ramsey's theorem \cite{Ramsey_1930}, which guarantees a large clique or independent set in a large enough graph.
	For completeness, we present the corresponding folklore polynomial-time subroutine based on the Ramsey number bound given by Erd\H{o}s and Szekeres in 1935 \cite{Erds2009}.
	
	\begin{claim}\label{claim:ramsey_subroutine}
		There is a polynomial-time algorithm that, given a graph $G$ and two integers $r,s>0$ such that $|V(G)|\ge \binom{r+s-2}{r-1}$, outputs either an independent set of size $r$ or a clique of size $s$ in $G$.
	\end{claim}
	\begin{claimproof}			
		Note that $G$ necessarily has either an independent set of size $r$ or a clique of size $s$ because the corresponding Ramsey number satisfies $R(r,s)\le \binom{r+s-2}{r-1}$ \cite{Erds2009}.
		The algorithm is recursive.
		Since $r,s>0$, $|V(G)|\geq 1$ and $G$ has at least one vertex, denote it by $v$.
		If $r=1$ or $s=1$, then algorithm outputs $\{v\}$ and stops.
		For $r,s\geq 2$, consider vertex sets $A=N_G(v)$ and $B=V(G)\setminus \{v\}\setminus A$.
		Since $|V(G)|=|A|+|B|+1$ and $V(G)\ge \binom{r+s-2}{r-1}$ we have that either $|A|\ge \binom{r+s-3}{r-1}$ or $|B|\ge \binom{r+s-3}{r-2}$.
		If $|A|\ge \binom{r+s-3}{r-1}$ then $G[A]$ has either an independent set of size $r$ or a clique of size $s-1$.
		In this case, the algorithm makes a recursive call to itself for the input $G[A]$, $r>0$ and $s-1>0$.
		If its output is a clique of size $s-1$, the algorithm additionally appends $v$ to it, resulting in a clique of size $s$.
		Thus, the algorithm obtains either an independent set of size $r$, or a clique of size $s$ as required.
		The case of $|B|\ge \binom{r+s-3}{r-2}$ is symmetric, as $v$ can be appended to any independent set in $G[B]$.
		
		Because the values of $\binom{x+y-2}{y-1}=\binom{x+y-2}{x-1}$ for $x\leq r$ and $y\leq s$ can be computed in $\Oh(rs)$ time using the recursive formula for binomial coefficients, 
		we obtain that the running time of the algorithm is $\mathcal{O}(n^2)$ where $n=|V(G)|$.
		The proof of the claim is complete.
	\end{claimproof}

	Now, we are ready to complete the algorithm:
	\begin{itemize}
		\item If $|V(G)|\le (k+2\ell)^k$, the algorithm applies the subroutine of \Cref{claim:disjoint_paths_bruteforce} to find the solution, outputs it and stops.
		\item Otherwise, since $(k+2\ell)^k\ge\binom{k+2\ell-2}{k-1}$, the algorithm applies the subroutine of \Cref{claim:ramsey_subroutine} with $r=k$ and $s=2\ell$.
		\item If the output is an independent set of size $k$, the algorithm outputs it, and stops.
		\item Otherwise, the output is a clique of size $2\ell$ in $G$.
		The algorithm uses the subroutine of \Cref{claim:disjoint_paths_by_clique} to find the solution, then outputs it and stops.
	\end{itemize}
	
	The correctness of the algorithm and the upper bound on its running time follow from the claims.
	The proof of the lemma is complete.
\end{proof}

We now proceed to present an algorithm that, for a highly connected graph $G$, a graph $H$, and a fixed mapping $f: V(H) \to V(G)$, either constructs a TM-embedding of $H$ in $G$ respecting $f$ and spanning $V(G)$, or finds an independent set of size $k$ in $G$.

\begin{lemma}[Spanning Lemma]
	Let $H$, $G$ be graphs, such that $H$ is non-empty.
	Let $f: V(H) \to V(G)$ be an injective mapping, and let $k$ be an integer.
	Let $G$ additionally be $(\max\{k+2,10\}\cdot h)$-connected, where $h=|V(H)|+|E(H)|$.
	There is an algorithm with running time $2^{(h+k)^{\Oh(k)}}+|G|^{\Oh(1)}$ that computes either a subgraph $M$ of $G$ so that $(M, f)$ is a TM-embedding of $H$ in $G$ spanning $V(G)$, or an independent set of size $k$ in $G$.
	\label{lemma:highly-connected}
\end{lemma}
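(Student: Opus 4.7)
The plan is to first produce an initial TM-embedding via a single call to Lemma~\ref{lemma:disjoint_paths_or_is}, and then iteratively enlarge that embedding until it spans $V(G)$, extracting an independent set of size $k$ whenever an augmentation is blocked.

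In the first phase, we invoke Lemma~\ref{lemma:disjoint_paths_or_is} on the $|E(H)|\le h$ terminal pairs $\{(f(u),f(v)) : uv\in E(H)\}$; the $10h$-connectivity hypothesis on $G$ suffices for the lemma to apply with $\ell=|E(H)|$. If the lemma returns an independent set of size $k$, we return it. Otherwise we obtain internally vertex-disjoint paths $\{P_e\}_{e\in E(H)}$ whose union $M$ is a TM-embedding of $H$ in $G$ respecting $f$, but possibly not spanning $V(G)$.

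In the second phase, as long as $V(M)\subsetneq V(G)$ we pick $v\in V(G)\setminus V(M)$ and apply a Chv\'atal--Erd\H{o}s-style move. Using Menger's theorem in $G-f(V(H))$, which remains at least $(k+1)h$-connected, we compute $k+1$ internally vertex-disjoint paths $Q_1,\dots,Q_{k+1}$ from $v$ to distinct internal subdivision vertices $v_1,\dots,v_{k+1}$ of $M$, with their internal vertices avoiding $V(M)$. For each $v_i$ we designate a neighbor $v_i^+$ along its subdivision path $P_{e_i}$ and examine $S=\{v,v_1^+,\dots,v_{k+1}^+\}$. If $S$ is independent in $G$, any $k$-subset is the required independent set. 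Otherwise $S$ contains an edge, which we translate into a local modification of $M$: an edge $v\,v_i^+$ lets us replace the $M$-edge $v_iv_i^+$ by the concatenation $v_i\to Q_i^{-1}\to v\to v_i^+$, absorbing $v$ and the internal vertices of $Q_i$ into $V(M)$, while an edge $v_i^+v_j^+$ with $v_i,v_j$ on a common subdivision path $P_e$ yields a P\'osa-style rotation of $P_e$ that splices in $v$, $Q_i$, and $Q_j$ and preserves the endpoints of $P_e$. We bias the Menger step so that its endpoints concentrate on a single subdivision path, bypassing the cross-path case. In the degenerate regime where no subdivision path of $M$ is long enough to host $k+1$ endpoints, $|V(M)|=\Oh(kh)$; we then fall back on a direct insertion, picking any $M$-edge $ab$ and routing a path from $a$ to $b$ through $v$ in $G-(V(M)\setminus\{a,b\})$, which is still highly connected for this $|V(M)|$, so such a path exists by Menger.

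The main obstacle is the cross-path case, where $v_i^+$ and $v_j^+$ would lie on distinct subdivision paths of $M$: a naive endpoint swap would merge two paths of $M$ and destroy the embedding of $H$, so we must either avoid it (by concentrating Menger endpoints on one subdivision path) or sidestep it via the degenerate-regime direct insertion above. Each augmentation does polynomial work---one Menger computation plus a local rewiring---and strictly enlarges $V(M)$, so at most $|V(G)|$ augmentations are performed. The dominant contribution to the overall running time is the single initial call to Lemma~\ref{lemma:disjoint_paths_or_is}, yielding the claimed $2^{(h+k)^{\Oh(k)}}+|G|^{\Oh(1)}$ bound.
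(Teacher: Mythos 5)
Your overall strategy matches the paper's: a single call to \Cref{lemma:disjoint_paths_or_is} to seed a TM-embedding $M$, followed by an iterative augmentation loop that grows $V(M)$ one vertex at a time via Menger fans and P\'osa-style rotations, returning an independent set when blocked. The rotation you describe for the same-path case is exactly the paper's, and you correctly identify the cross-path case as the central obstacle. However, your proposed resolution of that obstacle has a real gap.

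The paper concentrates the Menger endpoints on a \emph{single} subdivision path by brute force plus pigeonhole: with $\ell=|E(H)|$, it computes $\min\{(k+2)\ell,\,|V(M)|\}$ internally disjoint $x$--$V(M)$ paths (truncated at first contact, so their interiors genuinely avoid $V(M)$), discards the at most $2\ell$ that may hit terminals, and pigeonholes the remaining $\ge k\ell$ endpoints into $\ell$ subdivision paths to find one subdivision path hosting $\ge k$ of them. Your version instead requests only $k+1$ Menger paths and ``biases the Menger step so that its endpoints concentrate on a single subdivision path.'' With only $k+1$ endpoints spread across up to $|E(H)|$ subdivision paths, pigeonhole gives you nothing, and Menger itself offers no lever to bias where the endpoints land. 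The natural attempt to force this---apply Menger with target set equal to (the interior of) a single chosen $P_e$---breaks the invariant your rotation needs: the resulting $Q_i$'s are no longer guaranteed to have interiors disjoint from $V(M)\setminus V(P_e)$, since you are no longer truncating at first contact with \emph{all} of $V(M)$, and deleting $V(M)\setminus V(P_e)$ before Menger can destroy the connectivity bound once $V(M)$ is large. So the cross-path case is not actually sidestepped; it is exactly where your argument stalls. The fix is available to you for free---request $\Omega(k\ell)$ Menger paths rather than $k+1$ (the connectivity hypothesis easily supports it) and pigeonhole---but as written the proof does not go through. The degenerate ``small $V(M)$'' fallback you sketch also needs an explicit quantitative threshold tied to the remaining connectivity of $G-V(M)$; the paper handles the small-span case ($|V(M)|\le(k+2)\ell$) by observing that two of the Menger paths must land on an adjacent pair $s_1$ and its $P_1$-neighbor, which is cleaner and directly quantified.
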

\begin{proof}
	First, the algorithm reduces to the case where $H$ has no isolated vertices.
	Since each isolated vertex $v$ in $H$ has a fixed image $f(v)$ in $G$, the algorithm simply removes $v$ from $H$ and $f(v)$ from $G$.
	In the following steps of the algorithm, if a spanning TM-embedding of $H$ is produced, the algorithm extends this embedding with the removed images of isolated vertices.
	Such an embedding remains spanning for $G$, while all described procedures are performed in polynomial time.
	
	Note that removing an equal number of vertices from both $G$ and $H$ preserves the connectivity constraint in the statement.
	From now on, every vertex of $H$ has at least one incident edge, while $G$ is ($\max\{k+2,10\}\cdot |E(H)|$)-connected.
	
	In order to proceed, it is convenient to reformulate the problem in terms of disjoint paths:
	\emph{Given $\ell$ pairs $(s_1, t_1), (s_2, t_2), \ldots, (s_\ell, t_\ell)$ of distinct vertices in $G$, find a family of internally-disjoint paths $P_1, P_2,\ldots, P_\ell$ such that $P_i$ connects $s_i$ with $t_i$ for each $i\in[\ell]$ and each vertex of $G$ belongs to at least one path, and the union of the paths spans $V(G)$.}
	
	To see that the reformulation is equivalent, put $\ell=|E(H)|$ and $(s_i,t_i)=(f(x_i), f(y_i))$, where $x_i, y_i$ are the endpoints of the $i$-th edge of $H$.
	Then, given the paths $P_i$, it is straightforward to construct the subgraph $M$ by taking the union of the paths; $(M, f)$ then is the desired TM-embedding of $H$ in $G$.
	The reverse is also possible: given a TM-embedding $(M, f)$ of $H$ in $G$ spanning $V(G)$, the respective path can be found by traversing $M$ from $f(x_i)$ to $f(y_i)$.
	From now on, we do not refer to TM-embeddings, but work with the equivalent disjoint-path statement above.
	
	The first step of the algorithm is to find the initial family of paths $P_1, P_2, \ldots P_\ell$, that satisfies all constraints except for spanning $V(G)$.
	That is, $P_i$ is an $s_it_i$-path, and for each $i\neq j$,  $P_i$ and $P_j$ have no common vertices except for possibly their endpoints.
	For this, the algorithm invokes the algorithm of \Cref{lemma:disjoint_paths_or_is} as a subroutine.
	The call is valid since $G$ is $10\ell$-connected, therefore in time $2^{(k+\ell)^{\Oh(k)}}+|G|^{\Oh(1)}$ the initial sequence $P_1, P_2, \ldots, P_\ell$ is produced.
	
	The only step left is to ensure that the spanning constraint is satisfied by the path sequence.
	The algorithm achieves this incrementally.
	That is, the algorithm enlarges one of the paths in the sequence without violating any other constraints.
	If we show how this can be done exhaustively and in polynomial time, it is clearly sufficient to find the desired TM-embedding.

We describe the enlargement process, akin to the method outlined in \cite{jedlickovaK23}, with a key distinction. Unlike the scenario in \cite{jedlickovaK23}, we lack assurance that $G$ satisfies $\alpha(G)<k$, making the procedure potentially prone to failure. Therefore, we ensure that  (a) the enlargement is completed within polynomial time and (b) if the enlargement fails, an independent set of size $k$ is identified in $G$.

The procedure works as follows: If $P_1, P_2,\ldots, P_\ell$ span all vertices of $G$, the algorithm achieves the desired outcome, reports the solution, and halts. Otherwise, let $S=\bigcup_{i=1}^\ell V(P_i)$ denote the span of the path sequence, with the assumption that $S\neq V(G)$. The algorithm selects any $x\in V(G)\setminus S$. By the $(k+2)\ell$-connectivity of $G$, Menger's theorem guarantees the existence of $\min\{(k+2)\ell, |S|\}$ paths. Each path goes from  $x$ to $S$, and no two paths share any common vertex (including endpoints in $S$) but $x$. One can find such paths by utilizing the Ford-Fulkerson algorithm \cite{Ford1956-ym} to find a flow of size at most $(k+2)\ell$ in $\Oh(k\ell \cdot |G|)$ running time.

	If $|S|\le (k + 2) \ell$, then two of these paths end in $s_1$ and $v$, where $v$ is the neighbor of $s_1$ in $P_1$.
	Denote these paths by $Q_1$ and $Q_2$. These two paths are internally disjoint, both starting in $x$ and ending in $s_1$ and $v$, respectively.
	The algorithm enlarges $P_1$ by replacing it with the path $(s_1Q_1x)\circ(xQ_2v)\circ (vP_1t_1)$, that is, $x$ becomes embedded in $P_1$ between $s_1$ and $v$. Here and next, for a path $P$ and vertices $u$, $v$ on $P$ we denote by $uPv$ the subpath of $P$ from $u$ to $v$, and for two paths $P_1$ and $P_2$ sharing an endpoint, we denote by $P_1 \circ P_2$ their concatenation.
	This operation strictly increases the span of the path sequence.
	
	It is left for the algorithm to process the case $(k + 2) \ell < |S|$.
	There are precisely $(k+2)\ell$ paths that were found by the Ford-Fulkerson algorithm, so at least $k\ell$ paths end in internal vertices of $P_1, P_2, \ldots, P_\ell$.
	Hence, there is $i\in[\ell]$ such that at least $k$ paths end in internal vertices of $P_i$.
	The algorithm arbitrary takes $k$ such paths.
	Denote the $k$ selected paths by $Q_1, Q_2, \ldots, Q_k$ and their respective endpoints in $P_i$ by $v_1, v_2, \ldots, v_k$.
	For each $a \in [k]$, let $u_a$ be the successor of $v_a$ in $P_i$ if one follows vertices on $P_i$ from $s_i$ to $t_i$.
	
	If the set $\{u_1, u_2, \ldots, u_k\}$ is an independent set in $G$, then the algorithm reports this independent set and stops.
	Otherwise, there is an edge $u_a u_b$ in $G$ for some $a, b\in [k]$.
	Without loss of generality, $v_a$ goes before $v_b$ on $P_i$ in the order from $s_i$ to $t_i$.
	The algorithm then enlarges $P_i$ by replacing it with
	$(s_iP_iv_a) \circ (v_aQ_ax) \circ (xQ_bv_b) \circ (v_bP_iu_a) \circ u_au_b \circ (u_bP_it_i)$.
	Again, $x$ is embedded into $P_i$ while no original vertices of $P_i$ are lost, increasing the total span.
	
	The algorithm repeats the described procedure until paths $P_1, P_2, \ldots, P_\ell$ span the whole vertex set of $G$, or an independent set of size $k$ is encountered.
	This procedure is repeated at most $|V(G)|$ times, so the total running time of this part is polynomial in the size of the instance.
	
	The correctness of the algorithm follows from the discussion.
	The proof of the lemma is complete.
\end{proof}

Finally, we note that \Cref{lemma:highly-connected} admits an existential variant as a direct corollary: If $G$ is sufficiently well-connected compared to its independence number $\alpha(G)$ and the size of $H$, then a
spanning TM-embedding of $H$ exists in $G$ for any choice of $f$. As this may be of independent
interest, we state formally the following.

\begin{proposition}\label{prop:comb-highly-connected}
Let $H$ and $G$ be non-empty graphs, and let $f\colon V(H)\rightarrow V(G)$ be an injective mapping.
Let $G$ additionally be $(\max\{\alpha(G)+3,10\}\cdot h)$-connected, where $h=|V(H)|+|E(H)|$. There exists
a subgraph $M$ of $G$ such that $(M,f)$ is a TM-embedding of $H$ in $G$ spanning $V(G)$.
\end{proposition}

\subsection{Merging Lemma}

With the highly-connected case at hand, we now move to the second key ingredient in the proof of \Cref{thm:max_model}. We show that if the graph is not necessarily highly-connected, but contains a small-sized vertex subset such that after its removal the remaining components are highly-connected, then we can solve the instance in \classFPT time. Here, the parameter is the the size of the subset to remove plus the size of $H$ plus $k$. In essense, the proof boils down to reconstructing the global solution from the solutions obtained on the highly-connected components via \Cref{lemma:highly-connected}, therefore we dub the result Merging Lemma, stated next.

\begin{lemma}[Merging Lemma]
	Let $G$ be a graph, let $S\subseteq V(G)$ be a vertex subset in $G$, and let $C_1$, \ldots, $C_t$ be the connected components of $G - S$.
    Let $H$ be a graph, let $k$ be a parameter, and $L: V(H) \to 2^{V(G)}$ be a list assignment. Assume that each component $C_i$ is at least $\max\{k + 2, 10\} \cdot (3h + 3s)$-connected, where $h = |V(H)| + |E(H)|$, $s = |S|$.
	There is an algorithm that either returns a maximum-size list TM-embedding of $H$ in $G$ respecting the list assignment $L$, or correctly reports that none exists, or returns an independent set in $G$ of size $k$. The running time of the algorithm is $2^{(s + h + k)^{\Oh(k)}} \cdot |G|^{\Oh(1)}$.
    \label{lemma:model_sep}
\end{lemma}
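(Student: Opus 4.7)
My plan is to enumerate ``abstract patterns'' describing the coarse interaction of a putative list TM-embedding with the separator $S$, and for each pattern delegate the realization inside every component $C_i$ to the Spanning Lemma. A pattern $\pi$ will specify: (i) an injection from some subset $V_S^H \subseteq V(H)$ into $S$ compatible with $L$, giving the vertices of $H$ embedded into $S$; (ii) for every other vertex of $V(H)$, a symbolic ``component slot'' rather than a concrete vertex in some $C_i$; (iii) the subgraph $M[S \cap V(M)]$ with each of its vertices labeled by its role (branch vertex of $H$, or an internal subdivision vertex on a specific edge-path); (iv) for every edge of $H$, the sequence of $S$-arcs and symbolic ``excursions'' into component slots realizing the edge-path, recording the pair of $S$-endpoints bordering each excursion. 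Since the excursions of one edge-path are internally disjoint in $S$, their number is at most $\mathcal{O}(s)$, and the number of symbolic slots needed is at most $\mathcal{O}(h + hs)$. The overall pattern count is therefore $2^{(s+h)^{\mathcal{O}(1)}}$, comfortably within the target running time.

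For a fixed pattern $\pi$ I will first choose which concrete components $C_i$ fill which symbolic slots via a maximum-weight bipartite matching with weights $|V(C_i)|$, subject to the feasibility constraint that the slot's recorded excursion endpoints in $S$ must actually have neighbors in the component assigned to the slot. For each chosen component I construct a fragment graph $H_j$ whose vertices are the branch vertices placed in the slot plus one \emph{virtual terminal} per excursion endpoint entering the slot, and whose edges mirror the pieces of edge-paths of $H$ falling inside the slot. By construction $|V(H_j)| + |E(H_j)| \le 3h + 3s$. A second bipartite matching, respecting the list $L$ on branch vertices and the adjacency requirement $N_G(s) \cap V(C_i)$ on virtual terminals, yields a concrete injection $f_j : V(H_j) \to V(C_i)$ or certifies infeasibility. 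Whenever $f_j$ exists I invoke the Spanning Lemma on $(H_j, C_i, f_j, k)$, which is licensed precisely by the $(\max\{k+2,10\}\cdot(3h+3s))$-connectivity of $C_i$: the lemma either produces a spanning TM-embedding of $H_j$ in $C_i$, which glues with the $S$-part of $\pi$ through the virtual terminals, or returns an independent set of size $k$ in $G$, in which case I output it and halt. The global size contributed by $\pi$ is $|V(M) \cap S| + \sum_{\text{chosen}} |V(C_i)|$, and I output the maximum over all $\pi$, declaring infeasibility only if every pattern fails and no independent set has been produced.

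The hard part will be getting the pattern formalism right so that three things hold simultaneously: (a) every list TM-embedding of $H$ in $G$ is captured by at least one feasible pattern, ensuring the computed maximum is truly optimal; (b) every pattern that survives both matching steps and the Spanning Lemma yields a legitimate global TM-embedding, which amounts to a consistent concatenation of the Spanning-Lemma outputs across distinct components with the $S$-part of $\pi$ at the virtual terminals; and (c) the combinatorial count of patterns fits into $2^{(s+h)^{\mathcal{O}(1)}}$, which hinges on realizing that symbolic component slots only need to range over a universe of size polynomial in $s+h$ rather than over the full index set $[t]$. A secondary subtlety is the interplay between the two matchings: the outer one optimizes total vertex count while the inner one enforces local feasibility, and I need to argue that separating them does not discard any optimal solution.
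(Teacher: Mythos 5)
Your overall plan -- enumerate a coarse interaction pattern with $S$, assign components to abstract slots, and delegate the realization within each component to the Spanning Lemma -- matches the structure of the paper's proof. The paper's object (a ``cut descriptor'' $(W, M_W, \xi, f_W)$) is essentially your pattern $\pi$ with one crucial difference: the paper makes the assignment of abstract slots to concrete components, their $\xi: W \to [t]$, \emph{part of what is enumerated}. It then observes that $t \le k$ may be assumed (otherwise picking one vertex per component already gives the desired independent set), so all assignments $\xi$ can be enumerated in $k^{\mathcal{O}(h+s)}$ time with no need for an optimization over them.

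Your decision to replace this enumeration with a maximum-weight matching introduces a genuine gap, which you flag but do not resolve. First, the two matching stages are not independent: you pick the outer assignment to maximize $\sum|V(C_i)|$ subject only to the excursion-adjacency constraint, and only afterwards test list constraints in the inner matching. If the weight-optimal outer assignment fails the inner check, you abandon the pattern, but a lower-weight outer assignment might have been globally feasible (and still better than anything achieved by other patterns); your algorithm would then report a non-maximum embedding or even false infeasibility. Merging both feasibility criteria into a single slot--component edge relation would repair this, but that is not what you describe. Second, the weight $|V(C_i)|$ is only correct for slots whose fragment $H_j$ contains an edge -- those are the components the Spanning Lemma will actually span. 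A slot that only contributes isolated branch vertices does not cover the whole component, so assigning it the weight $|V(C_i)|$ overestimates its contribution and can make the matching pick the wrong assignment. The paper sidesteps both problems by enumerating $\xi$ outright and, for each fixed $\xi$, using a single bipartite matching purely for a yes/no feasibility check (their condition CD5), then taking the maximum over cut descriptors; no per-pattern optimization of the assignment is needed. Since $t \le k$ is the only regime in which you do not immediately output an independent set, you should abandon the weight maximization and simply enumerate the slot-to-component map, which puts your argument back on solid ground and within the stated running time.
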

\begin{proof}

    \begin{figure}[h]
        \centering
        \includegraphics{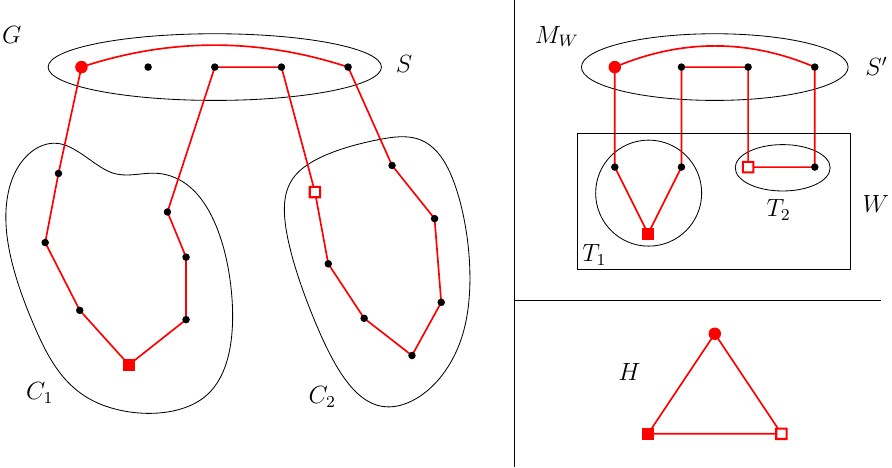}
        \caption{An illustration to the proof of Lemma~\ref{lemma:model_sep}: A graph $G$ with a TM-model of $H$, the graph $M_W$ of the cut descriptor corresponding to the model, and the graph $H$, the model of which is sought. The images of the vertices of $H$ in $G$ and $M_W$ that are fixed by $f$ and $f_W$, respectively, are indicated by the matching node shapes.}
        \label{fig:merging}
    \end{figure}

    \newcommand{\cutdesc}{cut descriptor\xspace}
    The main idea of the proof is to enumerate all possible interactions of the desired TM-model with $S$.
    We encapsulate the choice of the interaction in an object we call a \emph{\cutdesc}, to be formally defined next.
    Intuitively, a \cutdesc captures the behavior of a TM-model of $H$ in $G$ in the subset $S$ and its neighborhood: Given a TM-model $(M, T)$ of $H$ in $G$, its respective \cutdesc is obtained from $M$ by dissolving all non-terminals outside of $S$, where vertices not in $S$ are represented implicitly by the index of their component. The latter part is crucial to keep the number of choices the function of the parameter. As we will see later, such an implicit representation is sufficient since \Cref{lemma:highly-connected} always provides a spanning TM-model (or an independent set) and allows to fix an arbitrary injection of the terminals. See~Figure\ref{fig:merging} for an illustration of the relationship between a TM-model of $H$ in $G$ and its \cutdesc.

    Formally, a \cutdesc is the tuple $(W, M_W, \xi, f_W)$, where $W$ is a set of vertices (indexed implicitly from $1$ to $|W|$), $M_W$ is a graph with the vertex set $S' \cup W$ with $S' \subseteq S$,
    $\xi: W \to [t]$ is a mapping from $W$ to indices of the components $C_1$, \ldots, $C_t$, and $f_W: V(H) \to S' \cup W$ is an injective mapping from $V(H)$ to vertices of $M_W$. Denote $T_W = \Ima f_W$, and for each $i \in [t]$, denote $T_i = \{w \in W : \xi(w)=i\}$. We require the following properties for $(W, M_W, \xi, f_W)$ to be a \cutdesc:
    \begin{enumerate}[label=\textbf{CD\arabic*}]
        \item\label{CD_model} $(M_W, f_W)$ is a TM-embedding of $H$ in $M_W$;
        \item\label{CD_W_concise} every $w \in W$ is in $T_W \cup N_{M_W}(S)$;
        \item\label{CD_W_components} for each $w, w' \in W$ with $\xi(w) \ne \xi(w')$, $w$ and $w'$ are not adjacent in $M_W$;
        \item\label{CD_S_in_G} $M_W[S]$ is a subgraph of $G$, and for each $s \in S, h \in V(H)$ with $f_W(h) = s$, $s \in L(h)$;
        \item\label{CD_W_exists} there exists an injective assignment $\tau: W \to V(G)$ such that i) $\tau(w) \in C_{\xi(w)}$ ii) if there exists $h \in V(H)$ with $f_W(h) = w$, then $\tau(w) \in L(h)$ iii) $N_{M_W}(w) \cap S \subseteq N_G(\tau(w))$.
    \end{enumerate}

    We first observe an upper bound on the number of extra vertices in a \cutdesc.
    \begin{claim}
        If $(W, M_W, \xi, f_W)$ is a \cutdesc, then $|W|$ is at most $h + 2s$.
    \end{claim}
    \begin{claimproof}
        We call a vertex $w \in W$ a \emph{terminal} if $w \in T_W$, and a \emph{connector} otherwise. From \ref{CD_W_concise}, every connector in $w$ is adjacent to a vertex of $S$ in $M_W$. Since $f_W$ is injective, the number of terminals in $W$ is at most $|V(H)|$, therefore it remains to bound the number of connectors in $W$ by $|E(H)| + 2s$. Let $w \in W$ be a connector that is adjacent to a terminal in $S$, the number of such connectors is at most $\sum_{s \in S \cap T_W} |N_{M_W}(s) \cap W| \le |E(H)|$, since by \ref{CD_model} every $s \in S \cap T_W$ has degree $\deg_H(h)$, where $h$ is the unique pre-image of $s$ with respect to $f_W$, and no edge of $H$ is counted twice.
        Finally, the number of connectors in $W$ adjacent only to non-terminals of $S$ is at most $2s$, since every non-terminal has degree two in $M_W$ by \ref{CD_model}.
    \end{claimproof}

    We also argue that if any list TM-embedding of $H$ in $G$ exists, then there exists also a ``matching'' \cutdesc. For a subgraph $M$ of $G$ and a subset $C \in V(G)$, we say that $M$ \emph{hits} $C$ if $M[C]$ contains an edge, $M$ \emph{touches} $C$ if $M[C]$ contains no edges but at least one vertex, and $M$ \emph{avoids} $C$ if $V(M) \cap C = \emptyset$.

    \begin{claim}\label{claim:cutdesc_preserves}
        Let $(M, f)$ be a list TM-embedding of $H$ in $G$. Then there exist a \cutdesc $(W, M_W, \xi, f_W)$ such that $M_W[S] = M[S]$, and for each $i \in [t]$, $M$ hits $C_i$ if and only if $M_W$ hits $T_i$, touches $C_i$ if and only if $M_W$ touches $T_i$ and $|V(M) \cap C_i| = |T_i|$, and avoids $C_i$ if and only if $T_i = \emptyset$.
    \end{claim}
    \begin{claimproof}
        We construct the desired \cutdesc as follows. First, let $W$ be the subset of vertices in $G$ that are either terminals in $V(G) \setminus S$ or neighbors of $S$ in the model $M$, i.e., $W = (\Ima f \setminus S) \cup N_M(S)$. We then dissolve all vertices in $V(M) \setminus (S \cup W)$, which are all non-terminals by definition of $W$, to obtain $M_W$ from $M$; here the \emph{dissolving} a vertex $x$ of degree two with the neighbors $y$ and $z$ deletes $x$ and adds the edge $yz$.  
        We set $f_W$ as the respective mapping of the terminals obtained from $f$.
        Clearly, \ref{CD_model} is fulfilled for $(M_W, T_W)$, and \ref{CD_W_concise} also holds immediately by construction. We set $\xi$ so that $T_i = W \cap C_i$ for each $i \in [t]$. We now proceed to verify the remaining properties of a \cutdesc.

        For \ref{CD_W_components}, consider $w, w' \in W$ with $\xi(w) \ne \xi(w')$. Let $w \in T_i, w' \in W_j$ in $M_W$, $i \ne j$, and $w \in C_i$, $w' \in C_j$ in $G$. Since $S$ separates $C_i$ and $C_j$ in $G$, and $M$ is a subgraph of $G$, $w$ and $w'$ are not adjacent in $M$. By construction of $M_W$, only non-terminals outside of $S$ and $W$ are dissolved; every such non-terminal lies in $C_{i'}$ together with its neighbors, for some $i' \in [t]$, and dissolving it only changes edges between vertices of $C_{i'}$. Therefore, $w$ and $w'$ are also not adjacent in $M_W$.

        We now move to \ref{CD_S_in_G}, by construction $M_W[S] = M[S]$ and is therefore a subgraph of $G[S]$.
Moreover, $f \equiv f_W$ on $S$; let $s$ be such that $f(h) = f_W(h) = s$, then $s \in L(h)$ since $(M, f)$ is a list TM-embedding respecting $L$.

Finally, for \ref{CD_W_exists}, the assignment $\tau$ is given directly by construction as an identity assignment on $W \subset V(G)$. For each $w \in W$, $\tau(w) \in C_{\xi(w)}$ by the construction of $\xi$. If $f_W(h) = w$, then $w = \tau(w) \in L(h)$ since $f(h) = w$ and $(M, f)$ is a list TM-embedding respecting $L$. By construction of $M_W$, the edges between $W$ and $S$ are never changed between $M$ and $M_W$, therefore $N_{M_W}(w) \cap S = N_M(w) \cap S$, and the latter is a subset of $N_G(w)$ since $M$ is a subgraph of $G$.

    For the last claim of the lemma, we argue again by construction: since the dissolving operation only affects edges inside $C_i$, for some $i$, and dissolving a vertex in $C_i$ leaves $C_i$ with at least one edge inside, the status of the component (hit/touched/avoided) is always preserved between $M$ and $M_W$. Specifically, if an edge of $M$ was present in $C_i$, then after the dissolving, an edge is still present in $M_W$ in $T_i$. If no edge but some vertices of $M$ are present in $C_i$ then all these vertices are terminals or neighbors of $S$, and $|M \cap C_i| = |T_i|$, 
    and if no vertices of $M$ are in $C_i$, then also $T_i$ is empty.
    \end{claimproof}

    Now, the algorithm proceeds as follows. First, it branches over the choice of a \cutdesc. Then, for a fixed \cutdesc, the algorithm picks an arbitrary assignment fulfilling \ref{CD_W_exists}.
    Since the assignment fixes the vertices of $W$ in $G$, we
    run the algorithm of \Cref{lemma:highly-connected} on each $C_i$ to find a TM-model connecting the vertices of $T_i$ as prescribed by $M_W$. If any of the runs returns an independent set of size $k$, we output it and stop. Otherwise, we augment the model $(M_W, T_W)$ to a model $(M, T)$ of $H$ in $G$ via the returned model in each $C_i$. Out of all models of $H$ in $G$ obtained in the branches, the algorithm returns the one of maximum size. The above algorithm is shown in detail in \Cref{alg:model_sep}.

    \begin{algorithm}
        \caption{Algorithm of \Cref{lemma:model_sep}.}
        \label{alg:model_sep}
        \KwData{Graphs $G$ and $H$, mapping $L: V(H) \to 2^{V(G)}$, integer $k$, vertex subset $S \subseteq V(G)$ splitting $G$ into components $C_1$, \ldots, $C_t$ such that for each $i \in [t]$, $G[C_i]$ is $\max\{k + 2, 10\} (3h + 3s)$-connected, where $h = |V(H)| + |E(H)|$, $s = |S|$.}
        \KwResult{An independent set of size $k$ in $G$, or a list TM-embedding of $H$ in $G$ respecting $L$, or the output that no such TM-embedding exists.}

        $\mathcal{M} \gets \emptyset$;

        \For{$|W| \gets 0$ \KwTo $h + 2s$}{
            $W \gets $vertex set of size $|W|$;

            \ForEach{$M_W$ a graph on $S \cup W$, $\xi: W \to [t]$, $f_W: V(H) \to S \cup W$}{\label{line:enumerate}
                verify \ref{CD_model}--\ref{CD_S_in_G}, verify \ref{CD_W_exists} by a matching instance and pick arbitrary suitable $\tau$;\label{line:matching}

                \If{any of \ref{CD_model}--\ref{CD_W_exists} fail}{
                    \Continue;
                }

                $M \gets M_W$, $f \gets f_W$;

                \For{$i \gets 1$ \KwTo $t$}{
                    $T_i \gets \{w \in W: \xi(w) = i\}$, $H_i = M_W[T_i]$;

                    \eIf{$H_i$ contains no edges}{\label{line:check_Hi}
                        $O \gets $ empty graph over the vertex set $\tau(T_i)$;\label{line:empty_Hi}
                    }{
                        $O \gets $ invoke \Cref{lemma:highly-connected} with $H_i$, $G[C_i]$, $\left.\tau\right|_{T_i}$, $k$;\label{line:highly_connected}
                    }

                    \eIf{$O$ is an independent set of size $k$}{
                        \Return $O$.
                    }{
                        $M_i \gets O$;
                        $M \gets M$ where $M[T_i]$ is replaced by $M_i$\label{line:replacement}\;
                        \ForEach{$w\in (T_i\cap\Ima f_W)$}{
                        	$f(f_W^{-1}(w))\gets \tau(w)$\;
                        }
                    }
                }

                $\mathcal{M} \gets \mathcal{M} \cup \{(M, f)\}$;\label{line:add_solution}
            }
        }

        \eIf{$\mathcal{M} \ne \emptyset$}{
            \Return $(M, f) \in \mathcal{M}$ with $\max |M|$.\label{line:return}
        }{
            \Return No solution.\label{line:no_solution}
        }
\end{algorithm}

It remains to verify that Algorithm~\ref{alg:model_sep} is correct and to upper-bound its running time.
The series of claims presented next verifies the correctness of the algorithm. While checking the conditions \ref{CD_model}--\ref{CD_W_exists} for a fixed \cutdesc in \Cref{line:matching} is generally straightforward, we provide more details for the condition \ref{CD_W_exists} by constructing $\tau$ directly.

\begin{claim}
  In \Cref{line:matching}, the condition \ref{CD_W_exists} can be verified, and a suitable assignment $\tau$ can be constructed via a call to a maximum matching algorithm on an auxiliary bipartite graph of polynomial size.
\end{claim}
\begin{claimproof}
    We construct an auxiliary bipartite graph $B$ where the two parts are $W$ and $V(G)$, and $w \in W$, $v \in V(G)$ are adjacent if setting $\tau(w) = v$ does not violate any of the conditions i--iii) of \ref{CD_W_exists}.
    That is, $w$ and $v$ are adjacent if and only if
    \begin{itemize}
        \item $v \notin S$ and $v \in C_{\xi(w)}$,
        \item if $w \in \Ima f_W$ with $f_W(h) = w$, then $v \in L(h)$,
        \item $N_{M_W} \cap S \subseteq N_G(v)$.
    \end{itemize}
    Clearly, a suitable assignment $\tau$ exists if and only if a matching in $B$ that covers $W$ exists.
\end{claimproof}

Then, we show that invoking \Cref{lemma:highly-connected} with the given arguments is possible.

\begin{claim}
    \label{claim:call_highly}
 The call to \Cref{lemma:highly-connected} in \Cref{line:highly_connected}  is valid.
\end{claim}
\begin{claimproof}
By the condition in \Cref{line:check_Hi}, $H_i$ is not empty. The mapping $\left.\tau\right\vert_{T_i}$ is indeed an injective mapping from $T_i = V(H_i)$ to $C_i$, where the latter is guaranteed by \ref{CD_W_exists}, i). It remains to verify that $G[C_i]$ is $\max\{k + 2, 10\} \cdot (|V(H_i)| + |E(H_i)|)$-connected. By the statement of the lemma, $G[C_i]$ is $\max\{k + 2, 10\} \cdot (3h + 3s)$-connected, so it suffices to argue that $|V(H_i)| + |E(H_i)| \le 3h + 3s$.

    We consider two kinds of edges in $H_i$  separately. First, there are at most $2|E(H)|$ edges incident to vertices in $V(H_i) \cap T_W$, since for each $w \in T_w$, $\deg_{H_i}(w) \le \deg_{M_W}(w)$, and $\sum_{w \in T_W} \deg_{M_W}(w) \le 2|E(H)|$.
    Second, consider edges where both endpoints are connectors. By \ref{CD_model}--\ref{CD_W_concise}, each connector has degree two in $M_W$ and at least one neighbor in $S$. Therefore, two connectors adjacent in $H_i$ have in total two distinct edges into $S$. The size of the cut between $S$ and $W$ in $M_W$ is at most $2s + |E(H)|$, since the total degree of terminals in $S$ towards $W$ is at most $|E(H)|$, and every other vertex has degree two. Thus, the number of edges of the second type in $H_i$ is at most $s + |E(H)|/2$. The total number of edges in $H_i$ is then at most $\frac{3}{2} |E(H)| + s$.

    For vertices, there are at most $|V(H)|$ vertices in $H_i$ that are in $T_W$, and every other vertex is a connector in $M_W$. As above, every connector has at least one neighbor in $S$, and the cut size between $S$ and $W$ is at most $2s + |E(H)|$. Therefore, the total number of vertices in $H_i$ is at most $|V(H)| + |E(H)| + 2s$, and $|V(H_i)| + |E(H_i)| \le |V(H)| + \frac{5}{2} |E(H)| + 3s \le 3h + 3s$.
\end{claimproof}

We are now able to show that we arrive at a TM-embedding of $H_i$ in $G[C_i]$, either after invoking \Cref{lemma:highly-connected} such that the output is not an independent set or after taking the empty graph over $\tau(T_i)$.

\begin{claim}
 In \Cref{line:replacement}, $M_i$ is such that $(M_i, \tau\vert_{T_i})$ is a TM-embedding of $H_i$ in $G[C_i]$.
\end{claim}
\begin{claimproof}
    In case $H_i$ is empty, in  \Cref{line:empty_Hi} the algorithm constructs the empty graph over $\tau(T_i)$, which is later assigned to $M_i$. Clearly, this is the desired TM-embedding.

    If $H_i$ is not empty, a call to  \Cref{lemma:highly-connected} is made in \Cref{line:highly_connected}. By  \Cref{claim:call_highly}, the call is valid, and therefore $O$ is either an independent set of size $k$ in $G[C_i]$, or the subgraph $M_i$ in the desired TM-embedding. Since the condition statement before \Cref{line:replacement} explicitly verifies that $O$ is not an independent set of size $k$, in \Cref{line:replacement} the latter case holds.
\end{claimproof}

In case for each $i \in [t]$ we obtain the desired TM-embedding of $H_i$ in $G[C_i]$, we argue that the constructed $(M, f)$ is indeed a valid list TM-embedding of $H$ in $G$.

\begin{claim}
    At \Cref{line:add_solution}, $(M, f)$ is a list TM-embedding of $H$ in $G$ respecting $L$. 
\end{claim}
\begin{claimproof}
    The algorithm starts with the TM-model $M_W$ of $H$ in $M_W$, and replaces $M_W[T_i]$ by $M_i$ for each $i \in [t]$, where $M_i$ is the TM-model of $H_i$ in $G[C_i]$.
    The list condition for $M$ is fulfilled immediately since $f$ and $f_W$ coincide, $f_W$ respects $L$ on $S$ via \ref{CD_S_in_G}, and $\tau$ respects $L$ via \ref{CD_W_exists}, ii). We now argue that $(M, f)$ is a TM-embedding of $H$ in $G$.

    First, we argue that $M$ is a subgraph of $G$ by considering three types of edges in $M$. Edges with two endpoints in $S$ are not changed between $M_W$ and $M$, and therefore are present in $G$ since $M_W[S]$ is a subgraph of $G$ by \ref{CD_S_in_G}. Consider an edge between $s \in S$ and $v \in C_i$, $i \in [t]$, in $M$. By construction of $M$, there exists an edge between $s$ and $w \in T_i$ in $M_W$ with $\tau(w) = v$. By \ref{CD_W_exists}, $N_{M_W}(w) \cap S \subseteq N_G(\tau(w))$, meaning that $s$ and $v = \tau(w)$ are adjacent in $G$. Finally, consider an edge of $M$ with both endpoints outside $S$. By construction, the edge is a part of $M_i$, for some $i \in [t]$, and is present in $G$ since $M_i$ is a subgraph of $G[C_i]$.

    Second, we claim that all vertices in $V(M) \setminus \Ima f$ have degree two, and each $v \in \Ima f$ with $v = f(h_v)$ has $\deg_M(v) = \deg_H(h_v)$. Based on the above, the vertices of $S$ do not change the degree between $M_W$ and $M$, and $M_W$ is a TM-model of $H$. Consider a vertex $v \in V(M) \setminus \Ima f \setminus S$, $v \in C_i$ for some $i \in [t]$. If $v \notin T_i$, then $v$ has degree two in $M_i$ since $M_i$ is a TM-model of $H_i$, and has the same neighborhood in $M$ by construction of $M$. If $v \in T_i$,  then its degree coincides with the degree of $w$ in $M_W$ with $\tau(w) = v$; the degree is also two since $M_W$ is a TM-model of $H$ and $w \notin \Ima f_W$. Finally, for a vertex $v \in \Ima f \setminus S$, $v \in C_i$ for some $i \in [t]$, $v = \tau(w)$ for some $w \in T_i \cap \Ima f_W$ and $\deg_{M_i}(v) = \deg_{H_i}(w)$ since $M_i$ is a TM-model of $H_i$. Similarly, $N_{M_W}(w) \cap S = N_M(v) \cap S$, therefore $\deg_M(v) = \deg_{M_i}(v) + |N_M(v) \cap S| = \deg_{H_i}(w) + |N_{M_W}(w) \cap S| = \deg_H(h)$, where $v = f(h), w = f_W(h)$, since $(M_W, f_W)$ is a TM-model of $H$ and all neighbors of $w$ in $M_W$ are either in $S$ or in $T_i$ by \ref{CD_W_exists}.

    We now argue that dissolving non-terminals in $M$ results in a graph isomorphic to $H$ via $f$. We call the graph obtained after dissolving by $H'$ and identify $V(H')$ with a subset of $V(M)$.
For  $h_v, h_u \in V(H)$, we show that $f(h_v)$, $f(h_u)$ are adjacent in $H'$. Denote by $v = f(h_v)$, $u = f(h_u)$. If $v \in S$, let $v' = v$, otherwise $v' \in W$ is such that $\tau(v') = v$. We define $u'$ analogously to $u$. By the choice of $v', u'$, we have that $f_W(h_v) = v'$, $f_W(h_u) = u'$. Also, since $(M_W, f_W)$ is a TM-embedding of $H$, $v'$ and $u'$ are connected by a path $P'$ in $M_W$ whose all internal vertices are non-terminals.

    Based on $P'$, we construct a $vu$-path $P$ in $M$ whose all internal vertices are non-terminals.
    First, replace all vertices of $W$ in $P'$ with their images given by $\tau$. Second, for each two consecutive vertices $w_1, w_2$ in $P'$, such that $w_1, w_2 \in W$, insert in $P$ between $\tau(w_1)$, $\tau(w_2)$ the $\tau(w_1)\tau(w_2)$-path in $M_i$, whose all internal vertices are not in $T_i$. Such a path exists because $w_1, w_2 \in T_i$ for some $i \in [t]$ by \ref{CD_W_components}, and because $M_i$ is a TM-model of $M_W[T_i]$ in $G[C_i]$. Since for any $s \in S$ and  $w \in W$ that are neighbors in $M_W$, vertices $s$ and $\tau(w)$ are also neighbors in $G$ by \ref{CD_W_exists}, iii), $P$ is indeed a $vu$-path in $G$. By construction, all internal vertices are not in $\Ima f$. Hence $v$ and $u$ are adjacent in $H'$. It remains to observe that $\deg_{H'}(v) = \deg_M(v) = \deg_H(h_v)$ for any $v, h$ with $v = f(h_v)$. Therefore, $H'$ is isomorphic to $H$ via $f$, completing the proof.
\end{claimproof}

Next, we show that trying each possible \cutdesc guarantees that the largest list TM-embedding is found.

\begin{claim}
    The TM-embedding $(M, f)$ returned on \Cref{line:return} is of maximum size among all suitable list TM-embedding. 
\end{claim}
\begin{claimproof}
    Consider the list TM-embedding $(M^*, f^*)$ of $H$ in $G$ of maximum size. By \Cref{claim:cutdesc_preserves}, there exists a \cutdesc $(W, M_W, \xi, f_W)$ with $M_W[S] = M^*[S]$, and $M^*$ hits (touches/avoids) $C_i$ if and only if $M_W$ hits (touches/avoids) $T_i$ for each $i \in [t]$. Consider the TM-embedding $(M, f)$ constructed by the algorithm at \Cref{line:add_solution} with the \cutdesc set to $(W, M_W, \xi, f_W)$. Note that since the algorithm reached \Cref{line:return}, all calls to \Cref{lemma:highly-connected} resulted in a TM-model, and $M$ is successfully constructed. By construction, $M[S] = M_W[S] = M^*[S]$. Observe also that if $M_W$ hits $T_i$, then the TM-model $M_i$ returned by \Cref{lemma:highly-connected} spans $C_i$, meaning $|M \cap C_i| \ge |M^* \cap C_i|$. If $M^*$ touches $C_i$, then $|M^* \cap C_i| = |T_i|$, and $M$ has at least as many vertices in $C_i$. If $M^*$ avoids $C_i$, $|M^* \cap C_i| = 0$. Therefore, in each $S$, $C_1$, $C_2$, \ldots, $C_t$, $M$ has at least as many vertices as $M^*$, meaning $|M| \ge |M^*|$, proving the claim.
\end{claimproof}

We also argue that if the algorithm reports that no suitable TM-model is found, then indeed none exists.

\begin{claim}
    If $\mathcal{M}$ is empty by the end of the algorithm, then no list TM-model of $H$ in $G$ respecting $L$ exists. 
\end{claim}
\begin{claimproof}
    Assume a list TM-model of $H$ in $G$ exists. Then by \Cref{claim:cutdesc_preserves}, there exists also a \cutdesc. Consider the iteration of the algorithm with this \cutdesc. Since the algorithm reaches \Cref{line:no_solution}, all calls to \Cref{lemma:highly-connected} resulted in a suitable TM-embedding being found, therefore the algorithm necessarily adds a TM-embedding $(M, f)$ to $\mathcal{M}$ at \Cref{line:add_solution}. This contradicts the assumption that \Cref{line:no_solution} is reached.
\end{claimproof}

Finally, we estimate the running time. First note that all lines except \Cref{line:enumerate} and \Cref{line:highly_connected} incur at most a polynomial factor. By \Cref{lemma:highly-connected}, up to polynomial factors \Cref{line:highly_connected} takes time $2^{(h + s + k)^{\Oh(k)}}$, for each invocation. This holds since $|V(H_i)| + |E(H_i)| \le 3h + 3s$, see the proof of \Cref{claim:call_highly}. Moving to \Cref{line:enumerate}, let $w = h + 2s$ be the maximum size of $W$; there are
at most $t^w$ choices for $\xi$, and at most ${(s + w)}^{|V(H)|}$ choices for $f_W$. We may assume $t \le k$, since otherwise, an independent set of size $k$ can be found trivially by taking an arbitrary vertex from each $C_i$. We now argue that with a fixed $f_W$, there are at most $(s + w)^{2(s + w)}$ choices for $M_W$ that satisfy \ref{CD_model}. Indeed, terminals are fixed by $f_W$, therefore, it only remains to set the edges where at least one endpoint is a non-terminal. Since every non-terminal has degree two in $M_W$, there are at most $(s + w)^2$ choices for each non-terminal, and at most $(s + w)^{2(s + w)}$ choices in total. In total, \Cref{line:enumerate} performs at most $(s + h + k)^{\Oh(s + h)} = 2^{(s + h + k)^{1 + o(1)}}$ iterations. The overall running time is therefore at most $2^{(s + h + k)^{\Oh(k)}} \cdot |G|^{\Oh(1)}$.
\end{proof}

\subsection{Proof of \Cref{thm:max_model}}

We are now ready to show the main result of this section, which we restate for convenience.

\thmmaxmodel*
\begin{proof}
    \begin{algorithm}
        \caption{Algorithm of Theorem~\ref{thm:max_model}.}
        \label{alg:model_sep_thm}
        \KwData{Graphs $G$ and $H$, mapping $L: V(H) \to 2^{V(G)}$, integer $k$.}
        \KwResult{An independent set of size $k$ in $G$, or a list TM-embedding of $H$ in $G$ respecting $L$, or the output that no such TM-embedding exists.}

        $S \gets \emptyset$;

        \For{$j \gets 1$ to $\infty$}{
            $C_1$, \ldots, $C_t \gets $ connected components of $G \setminus S$;

            \If{$t \ge k$}{
                $I \gets \{v_1, \ldots, v_k\}$, where $v_i \in C_i$; 

                \Return $I$. \label{line:IS}
            }

            $\kappa_j \gets 3 \cdot \max\{k + 2, 10\} \cdot (|V(H) + |E(H)| + |S|)$;

            \eIf{for each $i \in [t]$, $C_i$ is $\kappa_j$-connected}{
                \Break;\label{line:break}
            }{
               pick $i$ s.t. $C_i$ is not $\kappa_j$-connected;

               find minimum separator $S'$ in $G[C_i]$;

               $S \gets S \cup S'$;\label{line:increase_S}
            }
        }

        $O \gets $ invoke \Cref{lemma:model_sep} on $G$, $H$, $L$, $k$, $S$;\label{line:invoke}

        \Return $O$.
\end{algorithm}

We iteratively construct the vertex set $S$ suitable for \Cref{lemma:model_sep}, then invoke \Cref{lemma:model_sep}.
    We start by setting $S = \emptyset$, then iteratively find a separator in one of the connected components of $G - S$, and add it to $S$.
    This happens until the connected components of $G - S$ are all $\kappa$-connected, where $\kappa = 3 \cdot \max\{k + 2, 10\} \cdot (|V(H)| + |E(H)| + |S|)$. If, at some point, there are at least $k$ connected components, we return the independent set constructed by taking a vertex from each connected component. Otherwise, when the connectivity condition is fulfilled, we run \Cref{lemma:model_sep} on the given input with the computed $S$, and return the output.
See  \Cref{alg:model_sep_thm} for a detailed algorithm description.

We now argue that the algorithm is correct. If the algorithm returns at \Cref{line:IS}, then the output is an independent set in $G$ of size $k$. Otherwise, the algorithm eventually reaches~\Cref{line:invoke}, since at every iteration of the main loop, the size of $S$ increases by at least one. It is left to verify that the call to \Cref{lemma:model_sep} is valid. For that, we need that every connected component of $G-S$ is $\kappa$-connected, for $\kappa = 3 \cdot \max\{k + 2, 10\} \cdot (|V(H)| + |E(H)| + |S|)$. However, this is exactly the condition at \Cref{line:break}, and this is the only option for the algorithm to break out of the main loop while reaching \Cref{line:IS}. Therefore, the algorithm's output is correct by \Cref{lemma:model_sep}.

It remains to upper-bound the running time of the algorithm. For that, it is crucial to upper-bound the size of $S$ when the algorithm reaches \Cref{line:invoke}. First, we observe that the main loop performs at most $k$ iterations before it stops either at \Cref{line:IS} or at \Cref{line:break}. Specifically, at the start of the $j$-th iteration, the number of connected components in $G - S$ is at least $j$, which implies that at iteration $k$, the loop necessarily stops at \Cref{line:IS}. This holds since each iteration increases the number of connected components by at least one, since a separator of one of the former components is added into $S$, while the remaining components are unchanged.

We now also observe that on the $j$-th iteration, the size of $S$ increases by at most $\kappa_j$, since we add to $S$ the minimum separator $S'$ of some $G[C_i]$, and $G[C_i]$ is not $\kappa_j$-connected. Let $s_j$ be the size of $S$ at the start of the $j$-th iteration, and let $\alpha = 3 \cdot \max\{k + 2, 10\}$. We show the following bound.

\begin{claim}
    It holds that $s_j \le (\alpha + 2)^j \cdot h$, where $h = |V(H)| + |E(H)|$.
\end{claim}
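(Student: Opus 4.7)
The plan is to prove the bound by induction on $j$, exploiting the recurrence relating $s_{j+1}$ to $s_j$ through $\kappa_j$.

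For the base case $j = 1$, the algorithm starts with $S = \emptyset$, so $s_1 = 0 \le (\alpha + 2) \cdot h$.

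For the inductive step, I would observe that on the $j$-th iteration, the algorithm adds the minimum separator $S'$ of some component $G[C_i]$ that is not $\kappa_j$-connected; therefore $|S'| < \kappa_j$. Plugging in the definition $\kappa_j = \alpha \cdot (h + s_j)$ gives the recurrence
\[
    s_{j+1} \le s_j + \kappa_j \le s_j + \alpha(h + s_j) = (1 + \alpha) s_j + \alpha h.
\]
Assuming inductively that $s_j \le (\alpha + 2)^j \cdot h$, the goal is then to verify
\[
    (1 + \alpha)(\alpha + 2)^j h + \alpha h \le (\alpha + 2)^{j+1} h.
\]
Dividing by $h$ and rearranging, this reduces to $\alpha \le (\alpha + 2)^j$, which holds since $(\alpha + 2)^j \ge \alpha + 2 > \alpha$ for all $j \ge 1$. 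This closes the induction.

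The argument is essentially a one-line inductive computation; there is no real obstacle beyond carefully tracking the definition of $\kappa_j$ and verifying that the minimum separator of a non-$\kappa_j$-connected component has size strictly less than $\kappa_j$, which is immediate from the definition of connectivity. The bound $s_j \le (\alpha + 2)^j h$ with $j \le k$ will then feed into the final running-time estimate via \Cref{lemma:model_sep}.
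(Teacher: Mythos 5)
Your proof is correct and follows exactly the same inductive argument as the paper: same base case, same recurrence $s_{j+1} \le (1+\alpha)s_j + \alpha h$ derived from $\kappa_j = \alpha(s_j + h)$, and the same reduction of the inductive step to $\alpha \le (\alpha+2)^j$. There is no meaningful difference from the paper's proof.
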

\begin{claimproof}
    We show the claim by induction. The claim holds for $j = 1$ since $s_1 = 0$.
    Assume the claim holds for $s_j$, consider $s_{j + 1}$:
    \begin{multline*}
        s_{j + 1} \le s_j + \kappa_j = s_j + \alpha \cdot (s_j + h)= (\alpha + 1) \cdot s_j + \alpha \cdot h\\
        \le (\alpha + 1) \cdot(\alpha + 2)^j \cdot h + \alpha \cdot h \le (\alpha + 2)^{j + 1} \cdot h,
    \end{multline*}
    where the last inequality holds since $\alpha \cdot h \le (\alpha + 2)^j \cdot h$ for $j \ge 1$.
\end{claimproof}

Since the number of iterations is at most $k$, the size of $S$ by \Cref{line:invoke} is then at most $s_k = k^{\Oh(k)} \cdot h$.
We now upper-bound the running time of Algorithm~\ref{alg:model_sep} by the running time bound of \Cref{lemma:model_sep}, since clearly up to \Cref{line:invoke} the algorithm takes polynomial time. By \Cref{lemma:model_sep}, the total running time is thus at most $2^{(hk^{\Oh(k)})^{\Oh(k)}} \cdot |G|^{\Oh(1)} = 2^{|H|^{\Oh(k)} \cdot k^{\Oh(k^2)}} \cdot |G|^{\Oh(1)}$.
\end{proof}

\section{Open questions}\label{sec:concl}
We conclude with several open research questions. All our results concern undirected graphs. The complexity of all these problems on \emph{directed} graphs of small independence number remains open. For example, we do not know whether \textsc{Hamiltonian Cycle} is \classNP-complete on {directed} graphs with an independence number $k$ for some fixed $k \geq 2$.  But we cannot exclude that the problem may be in \classXP or even \classFPT parameterized by $k$. Parameterized complexity offers a powerful toolbox for obtaining lower bounds for structural parameterization \cite{cut-and-count, DBLP:journals/talg/FominGLSZ19, FominGLS10, LokshtanovMS18}. However, all these reductions use gadgets with large independence numbers. Obtaining algorithmic lower bounds for parameterization with small independence number appears more challenging.

A similar set of questions is about the analogue of  \Cref{thm:above_GM} for directed graphs. The theorem of Gallai and Milgram holds for directed graphs.  However,  whether it could be extended algorithmically or not is not known. The ``simplest'' question here is whether there is a polynomial-time algorithm for computing a path cover of size $\alpha(G) - 1$?

\addcontentsline{toc}{section}{Acknowledgements}
\section*{Acknowledgements}

Fedor V.\ Fomin and Petr A.\ Golovach were supported by the Research Council of Norway via the BWCA project (grant no.~314528), Fedor V.\ Fomin was supported by the European Research Council (ERC) via grant NewPC (reference 101199930).  \begin{minipage}{0.2\textwidth}\includegraphics[width=0.9\textwidth]{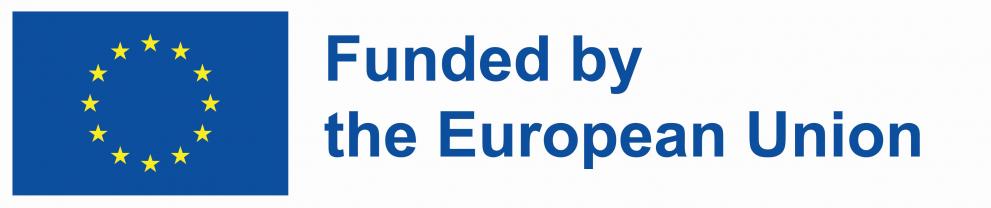}\end{minipage}

Petr A.\ Golovach was supported by the Research Council of Norway via the Extreme-Algorithms project (grant no~355137). 

Nikola Jedličková and Jan Kratochvíl were supported by GAUK 370122. 
    In the later stages of preparing this article, Nikola Jedličková was supported by the project PRIMUS/24/SCI/008 and Charles University Research Centre program No. UNCE/24/SCI/022.
    
    Danil Sagunov was supported by the Ministry of Science and Higher Education of the Russian Federation (agreement 075-15-2025-344 dated 29/04/2025 for Saint Petersburg Leonhard Euler International Mathematical Institute at PDMI RAS).


\end{document}